\documentclass{lmcs} 

\keywords{Temporal logics, forests, forest automata, Moore product, the modality EF, the modality AF}

\ACMCCS{[{\bf Theory of computation}]: Formal languages and automata theory; Modal and temporal logics;}

\usepackage{tikz}
\usepackage{hyperref}
\usepackage[utf8]{inputenc}
\theoremstyle{plain}\newtheorem{satz}[thm]{Satz} 

\def\eg{{\em e.g.}}
\def\cf{{\em cf.}}


\begin{document}

\title[Albegraic characterization of Forest Logics]{Algebraic characterization of Forest Logics}

\author[K.~Gelle]{Kitti Gelle}	
\address{University of Szeged, Szeged, Hungary}	
\email{kgelle@inf.u-szeged.hu}  
\thanks{This research was supported by NKFI grant no. 108448.}	

\author[Sz.~Iván]{Szabolcs Iván}	
\address{University of Szeged, Szeged, Hungary}	
\email{szabivan@inf.u-szeged.hu}  





\begin{abstract}
  \noindent
  In this paper we define future-time branching temporal logics evaluated over forests,
  that is, ordered tuples of ordered, but unranked, finite trees.
  We associate a rich class FL[$\mathcal{L}$] of temporal logics to each set L of (regular) modalities.
  Then, we define an algebraic product operation which we call the Moore product,
  which operates on forest automata, algebraic devices recognizing forest languages.
  We show a lattice isomorphism between the pseudovarieties of finite forest automata,
  closed under the Moore product, and the classes of languages of the form FL[$\mathcal{L}$].
  We demonstrate the usefulness of the algebraic approach by showing the decidability
  of the membership problem of a specific pseudovariety of finite forest automata,
  implying the decidability of the definability problem of the FL[EF] fragment of the logic
  CTL.
  Then, using the same approach, we also formulate a conjecture regarding a
  decidable characterization of the FL[AF] fragment which has currently an
  unknown decidability status (also in the setting of ranked trees).
\end{abstract}

\maketitle

In~\cite{DBLP:journals/tcs/Esik06a}, a temporal logic $\mathrm{FTL}(\mathcal{L})$
was associated with a class $\mathcal{L}$ of \emph{tree} languages.
In that setting, the structures over which the formulas were evaluated were \emph{trees}:
well-formed terms over a ranked alphabet.
The widely studied temporal logic $\mathrm{CTL}$ is also of the form $\mathrm{FTL}(\mathcal{L})$
for some suitable (finite) language class $\mathcal{L}$, thus an (algebraic, say) characterization
of these logics provides a characterization for this logic as well, and several fragments and
extensions of it are also handled in a uniform way.
In~\cite{DBLP:journals/tcs/Esik06a}, such an algebraic characterization was proved,
namely when the logic $\mathrm{FTL}(\mathcal{L})$ is expressive enough (that is, if
the so-called $\mathsf{next}$ modalities, with $\mathrm{X}_i\varphi$ meaning that
the $i$-th child of the root node of the tree satisfies $\varphi$, are expressible
in the logic in question). In that case (if an additional natural property of $\mathcal{L}$
is satisfied), an Eilenberg-type correspondence was shown between the lattice of
these language classes $\mathbf{FTL}(\mathcal{L})$ and \emph{pseudovarieties} of finite
tree automata closed under the so-called \emph{cascade product}.
Note that the decidability status of the definability problem of $\mathrm{CTL}$
(that is, to determine whether a regular tree language is definable in this logic)
is still open after some thirty years, and in the case of \emph{words}, many
logics' definability problem was shown to be decidable using algebraic methods
of this form: first one shows that a language is definable in some logic
if and only if the minimal automaton of the language (or its syntactic monoid) is contained in 
a specific pseudovariety of finite automata (or finite monoids), which is then in turn showed to
have a decidable membership problem. Notable instances of this line of reasoning
include the case of \emph{first-order} logic (which also has an unknown decidability
status for the case of trees)~\cite{McNaughton:1971:CA:1097043,schutzenberger}.
For a comprehensive treatment of logics on words see~\cite{Straubing:1994:FAF:174703}.

Extending the initial results of~\cite{DBLP:journals/tcs/Esik06a}, 
in~\cite{DBLP:journals/fuin/EsikI08} a more restricted product operation named the \emph{Moore product} of
tree automata (being a special case of the cascade product) was defined and applying this
product, we succeeded to prove an algebraic characterization of the logics
$\mathrm{FTL}(\mathcal{L})$, without the requirement on the $\mathsf{next}$ modalities:
namely, a (regular) tree language is definable in $\mathrm{FTL}(\mathcal{L})$ if and
only if its minimal automaton is contained in the least pseudovariety of tree automata
which contains all the minimal automata of the members of $\mathcal{L}$
and which is closed under the Moore product. In~\cite{DBLP:journals/fuin/EsikI08a}
the usefulness of this characterization was demonstrated by showing that
the fragment of $\mathrm{CTL}$ in which one is allowed to use only the non-strict version of
the $\mathrm{EF}$ modality, which we might call $\mathrm{TL}(\mathrm{EF}^*)$, has
a (low-degree) polynomial-time decidable definability problem (since the corresponding
pseudovariety of finite tree automata has an efficiently decidable membership problem).
(For the same result, proven by an Ehrenfeucht-Fra\"\i ss\'e type approach,
see also~\cite{Wu:2007:NCT:1225969.1226362}.)

Nowadays, instead of strictly ranked trees, unranked trees or \emph{forests} (that is,
finite tuples of finite unranked trees) are considered as models, partially due to the
larger class of real-life problems that can be modeled by them.
For example, running jQuery or XPath queries on JSON objects or XML files one usually
works with \emph{unranked} trees, hence the notion of forest is clearly a motivated
one. Now for setting forests instead of trees as the primary category of objects is
a matter of personal taste, and doing so makes the mathematic treatment more uniform.

In~\cite{DBLP:journals/corr/abs-1208-6172}, a rich class of \emph{forest logics},
called $\mathrm{TL}(\mathcal{L})$ ($\mathrm{TL}$ for ``temporal logic'') was associated
to a class $\mathcal{L}$ of modalities (analogously, but not exactly corresponding
to the logics in~\cite{DBLP:journals/tcs/Esik06a}).
There, an Eilenberg-type correspondence has been shown between the classes of languages
definable in $\mathrm{TL}(\mathcal{L})$ and pseudovarieties of \emph{forest algebra}
(which can be seen as algebraic devices extending the notion of \emph{syntactic semigroups}
from the word setting) closed under the \emph{wreath product}.
Note that characterizations of the form ``this logical construct corresponds to that
algebraic product'' are frequent, e.g., for trees a quite similar characterization,
the block product of preclones (which are also extensions of syntactic semigroups,
in this case for ranked trees) corresponds to so-called \emph{Lindström quantifiers}
(which are essentially the same constructs as the one used in $\mathrm{FTL}(\mathcal{L})$)
also exists~\cite{esik:hal-00173125}.

In this paper we propose another class of forest logics, which we call $\mathrm{FL}(\mathcal{L})$,
associated to a class $\mathcal{L}$ of modalities, which \emph{syntactically} coincides
with the $\mathrm{TL}(\mathcal{L})$ of~\cite{DBLP:journals/corr/abs-1208-6172}
(perhaps unsurprisingly, since~\cite{DBLP:journals/corr/abs-1208-6172} explicitly
states that ``This is similar to notions introduced by Esik in~\cite{DBLP:journals/tcs/Esik06a}'').
The \emph{semantics} of $\mathrm{TL}$ and $\mathrm{FL}$ differ, though, when it comes to
evaluate modalities. Specifically, in~\cite{DBLP:journals/corr/abs-1208-6172}, a tree
$a(s)$ ``tree-satisfies'' a forest formula $\varphi$ if the forest $s$ satisfies $\varphi$,
that is, there are two satisfaction relations between trees and forest formulas,
$\models$ and $\models_f$, and these relations actually differ.
In the semantics proposed in the current paper there is only one satisfaction relation.
In our view, formulas of $\mathrm{TL}$ are evaluated as they would contain a ``built-in''
$\mathsf{next}$ operator: a tree satisfies a forest formula iff the forest formed by its direct
subtrees satisfies it -- this behaviour can be modeled in $\mathrm{FL}$ be using an
explicit $\mathsf{next}$ operator first, and then the modality in question.
Thus, results of~\cite{DBLP:journals/corr/abs-1208-6172} correspond to the results
of~\cite{DBLP:journals/tcs/Esik06a}, that is, assuming the presence of the $\mathsf{next}$
modalities (reformulated the results from the tree setting to the forest setting),
while the current results lift the results of~\cite{DBLP:journals/fuin/EsikI08}
to the forest setting, providing a generalization of both~\cite{DBLP:journals/fuin/EsikI08a}
and~\cite{DBLP:journals/corr/abs-1208-6172} at the same time.

Also, in the current paper we show the applicability of our framework (in which we
work with pseudovarieties of forest \emph{automata} instead of forest \emph{algebras})
by showing that the non-strict $\mathrm{EF}$ logic has a decidable definability problem
also in this setting.
Since the class of minimal forest algebra of languages definable in the non-strict
$\mathrm{EF}$ is \emph{not} closed under taking wreath product (basically due to the
fact that the equation $aax=ax$, which holds in the minimal automaton
of the corresponding modality, is not preserved), this result
cannot be gathered via the wreath product since the logic in question falls outside
of the scope of~\cite{DBLP:journals/corr/abs-1208-6172}.
This result generalizes~\cite{DBLP:journals/fuin/EsikI08a} from trees to forests.
We think that for this result,
the (decidable) equational description of the corresponding pseudovariety
of finite forest automata is also compact and nice, and the proofs are somewhat less
heavy on technicalities in the forest setting than in the tree setting.

It is of course clear that our Moore product of automata, viewed at the level of syntactic
forest algebras, translates into a \emph{restricted} form of the wreath product of~\cite{DBLP:journals/corr/abs-1208-6172}. This is similar to the relation of
the cascade, Glushkov or Moore product of automata, which translate to restricted variants
of the wreath, block or semidirect product of the syntactic monoid.
Also, as the minimal forest automaton can be exponentially more succint than
the syntactic forest algebra, we can hope for better time complexity results when
the pseudovariety in question is shown to have a decidable membership problem.

We also note that though the strict $\mathrm{EF}$ is indeed a more expressive fragment of CTL than the non-strict variant (since non-strict $\mathrm{EF}\varphi$ can be expressed as
$\varphi\vee\mathrm{EF}\varphi$ with the strict variant of the modality),
but the logic being less expressive does not entail neither
uninterest nor having an easier definability problem.
(Just as first-order logic is less expressive than monadic
second-order logic and its definability problem still has
an unknown decidability status.) Also, the non-strict
$\mathrm{EF}$ is one of the ``simplest'' logics (as the corresponding
Moore pseudovariety of forest automata is generated by a single
two-state automaton over a binary alphabet) which falls outside
the scope of the logics of the form $\mathrm{TL}(\mathcal{L})$.

\section{Notation}
\subsection{Trees, forests}
The notions of trees, forests, contexts and other structures are following~\cite{DBLP:journals/corr/abs-1208-6172} apart from slight notational
changes (e.g., we use $\Sigma$ for the alphabet instead of $A$).

Let $\Sigma$ be a nonempty finite set (an \emph{alphabet}).
The sets $T_\Sigma$ of \emph{trees} and $F_\Sigma$ of \emph{forests} over $\Sigma$ are defined via mutual induction as the least sets satisfying the following conditions:
if $s\in F_\Sigma$ is a forest and $a\in\Sigma$ is a symbol, then $a(s)$ is a tree,
and if $n\geq 0$ is an integer and $t_1,\ldots,t_n$ are trees, then the formal (ordered) sum
$t_1+\ldots+t_n$ is a forest. In particular, for $n=0$ the \emph{empty forest}
$\mathbf{0}$ is always a forest, thus $a(\mathbf{0})$ are trees for any symbol $a\in\Sigma$.
A \emph{forest language} (over $\Sigma$) is an arbitrary set $L\subseteq F_\Sigma$ of
($\Sigma$-)forests.

\begin{exa}
	\label{example-forest}
	In our examples, we remove $\mathbf{0}$ from nonempty forests for better readability and write $a$ for $a(\mathbf{0})$,
	$a+ab$ for $a(\mathbf{0})+a(b(\mathbf{0}))$ etc. When $\Sigma=\{a,b,c,d\}$, then the following Figure
	depicts the forest $d(b(a)+a(d+a+b))+c\in F_\Sigma$:
	
	{\centering\begin{tikzpicture}[scale=0.7,thick,every node/.style={transform shape}]
		\node[circle] (root) at (0,0) [draw] {$d$};
		\node[circle] (root2) at (2.1,0) [draw] {$c$};
		\node[circle] (node1) at (-1,-1) [draw] {$b$};
		\node[circle] (node2) at (1,-1) [draw] {$a$};
		\node[circle] (node11) at (-1,-2) [draw] {$a$};
		\node[circle] (node21) at (0.3,-2) [draw] {$d$};
		\node[circle] (node22) at (1,-2) [draw] {$a$};
		\node[circle] (node23) at (1.7,-2) [draw] {$b$};
		\path (root) edge (node1);
		\path (root) edge (node2);
		\path (node1) edge (node11);
		\path (node2) edge (node21);
		\path (node2) edge (node22);
		\path (node2) edge (node23);
		\end{tikzpicture}
		
	}
	
	The forest above is a sum of two trees, one of them being $d(b(a)+a(d+a+b)$, the other being $c$.
\end{exa}

\subsection{Forest automata}

There are various algebraic devices (``automata'') recognizing forest languages.
One of them are the \emph{forest algebras} of~\cite{DBLP:journals/corr/abs-1208-6172}
another are the \emph{forest automata} of~\cite{DBLP:conf/birthday/BojanczykW08}.
For the aims of this paper, we find forest automata to be more suitable.
The reason for this is that it will be convenient to deal with the
actions induced by \emph{elementary} contexts $a(\Box)$ with $a\in\Sigma$,
and in forest algebras (which resemble closely the syntactic monoids well-known
from the case of finite words, both in the horizontal and the vertical direction)
one actually has to use pairs of forest algebras and morphisms from the free forest
algebra to these forest algebra as in our setting, the classes are not necessarily
closed under inverse morphisms. 
Using forest automata, we need only this (conceptually more simpler, we would argue) model
of computation.

A (finite) \emph{forest automaton} (over $\Sigma$) is a system
$A=(Q,\Sigma,+,0,\cdot)$ where $(Q,+,0)$ is a (finite) monoid (also called the
\emph{horizontal monoid} of $A$) and $\cdot:\Sigma\times Q\to Q$
defines a left action of $\Sigma^*$ on $Q$ (i.e. $(Q,\Sigma,\cdot)$ is a $\Sigma$-automaton).
Given the forest automaton $A$, trees $t\in T_\Sigma$ and forests $s\in F_\Sigma$
are \emph{evaluated} in $A$ to $t^A,s^A\in Q$ by structural induction as follows:
the value of a tree $t=a(s)$ is $t^A=a\cdot s^A$ and
the value of a forest $s=t_1+\ldots+t_n$ is $s^A=t_1^A+\ldots+t_n^A$.
In particular, $\mathbf{0}^A=0$, the zero of the horizontal monoid.


When the above automaton $A$ is also equipped with a set $F\subseteq Q$ of final states,
then $A$ \emph{recognizes} the forest language $L(A,F)=\{s\in F_\Sigma:s^A\in F\}$
by the set $F$ of final states. Forest languages of the form $L(A,F)$ are said to be
\emph{recognizable} in $A$, and a forest language is called \emph{recognizable}
if it is recognizable in some finite forest automaton.

Observe that $F_\Sigma$ equipped with the sum $(t_1+\ldots+t_n)+(t'_1+\ldots+t'_m)=(t_1+\ldots+t_n+t'_1+\ldots+t'_m)$ and $a\cdot s=a(s)$ (viewed as a forest consisting of a single tree) is a forest automaton.

\begin{exa}
	\label{example-ef-automaton}
	Let $\mathrm{EF}$ be the forest automaton $(\{0,1\},\{0,1\},\vee,0,\vee)$ over the alphabet $\Sigma=\{0,1\}$.
	(Note that since $\Sigma=Q$, both the action and the horizontal operation become $Q^2\to Q$ functions.)
	Then, for any forest $s\in F_\Sigma$ we have $s^{\mathrm{EF}}=1$ if and only if $s$ has a
	node (either root or non-root) labeled $1$.
	
	Let $L_{\mathrm{EF}}\subseteq F_{\{0,1\}}$ stand for this language $L(\mathrm{EF},\{1\})$.
	That is, $1$, $1(0+0)$, $0(0+1)$ are in $L_{\mathrm{EF}}$ but $\mathbf{0}$, $0$ and $0(0+0(0))$
	are not.
\end{exa}
\begin{exa}
	\label{example-af-automaton}
	Let $L_{\mathrm{AF}}\subseteq F_{\{0,1\}}$ stand for the least language satisfying the following
	properties:
	\begin{itemize}
		\item All trees of the form $1(s)$ are members of $L_{\mathrm{AF}}$.
		\item If $n>0$ and $s=t_1+\ldots+t_n$ is a forest with $t_i\in L_{\mathrm{AF}}$ for each $i\in[n]$,
			then $s$ and $0(s)$ are members of $L_{\mathrm{AF}}$.
	\end{itemize}
	Basically, a forest belongs to $L_{\mathrm{AF}}$ iff it is nonempty and on each root-to-leaf path
	there exists a node labeled $1$.
	
	The (minimal) forest automaton of $L_{\mathrm{AF}}$ is $\mathrm{AF}=(\{0,1,2\},\{0,1\},\min,2,\cdot)$
	with $1\cdot x=1$ for each $x\in\{0,1,2\}$, $0\cdot 0=0\cdot 2=0$ and $0\cdot 1=1$.
	In $\mathrm{AF}$, a forest $s$ evaluates to $2$ if it is empty; to $1$ if it is a (nonempty) forest
	belonging to $L_{\mathrm{AF}}$; and to $0$ if it is a nonempty forest outside $L_{\mathrm{AF}}$.
\end{exa}

\subsection{Forest logics}

In this section we introduce a class of (future-time, branching) temporal logics
$\mathrm{FL}(\mathcal{L})$
(having state formulas only but no path formulas), parametrized by a set $\mathcal{L}$
of \emph{modalities}, which are forest languages themselves (not necessarily
over the same alphabet). In this section we assume that each alphabet $\Sigma$
comes with a total ordering but the expressive power of the logics will be independent
from the particular ordering chosen.

Though the \emph{syntax} of $\mathrm{FL}(\mathcal{L})$ is (essentially) the same
as the logics of~\cite{DBLP:journals/corr/abs-1208-6172}, the \emph{semantics}
is slightly different.
The change we propose in the semantics has a corollary which we find
a mathematically ``nice'' property:
in the semantics used in~\cite{DBLP:journals/corr/abs-1208-6172},
there are \emph{two} different satisfaction relations, $\models_t$ and $\models_f$
(tree and forest satisfaction, respectively), and these relations do not coincide
for trees: given a forest formula $\varphi$ and a tree $t=a(s)$, it can happen that
$t\models_t\varphi$ but not $t\models_f\varphi$ (that is, $t$ satisfies the formula
$\varphi$ viewed as a tree but not when viewed as a forest) or vice versa.
The reason is that the relation $\models_t$ automatically ``steps down'' one level
in $t$, i.e. $a(s)\models_t\varphi$ iff $s\models_f\varphi$ which is clearly different
than $t\models_f\varphi$.
The satisfaction relation of the semantics proposed in our paper is consistent
in this regard, i.e., there is no need for defining different satisfaction relations
for trees and forests: a tree satisfies a forest formula iff it satisfies the formula
viewed as a forest consisting of a single tree.

\subsubsection{Syntax.}
Given an alphabet $\Sigma$, and a class $\mathcal{L}$ of forest languages
(which are not necessarily $\Sigma$-languages), then the sets of \emph{tree formulas}
and \emph{forest formulas} of the logic $\mathrm{FL}(\mathcal{L})$ over $\Sigma$
are defined via mutual induction as the least sets satisfying all the following conditions:
\begin{itemize}
	\item $\top$ and $\bot$ are forest formulas.
	\item Each $a\in \Sigma$ is a tree formula.
	\item Every forest formula is a tree formula as well.
	\item If $\varphi$ and $\psi$ are forest formulas, then so are $(\neg\varphi)$ and
	$(\varphi\wedge\psi)$.
	\item If $\varphi$ and $\psi$ are tree formulas, then so are $(\neg\varphi)$ and
	$(\varphi\wedge\psi)$.
	\item If $L\in\mathcal{L}$ is a forest language over some alphabet $\Delta$
	and to each $\delta\in\Delta$, $\varphi_\delta$ is a tree formula over $\Sigma$,
	then $L(\varphi_\delta)_{\delta\in\Delta}$ is a forest formula (over $\Sigma$).
\end{itemize}
As usual, we use the shorthands $(\varphi\vee\psi)=\neg(\neg\varphi\wedge\neg\psi)$,
$\varphi\to\psi=\neg\varphi\vee\psi$ and remove redundant parentheses according to
the usual precedence of operators.
\begin{exa}
	\label{example-ex-syntax}
	Let $\Sigma=\{a,b,c,d\}$. Then $\varphi_0=a\vee c$ and $\varphi_1=b\vee c$ are tree
	formulas over $\Sigma$. Let $L_{\mathrm{EX}}\subseteq F_{\{0,1\}}$ be the
	language consisting of those forests having a depth-one node labeled $1$
	(e.g., $0(1(0))+0(1)$ is in $L_{\mathrm{EX}}$ but $\mathbf{0}$, $0+1$ and $0(0(1+0)+0)+0$ are not).
	Then $L_{\mathrm{EX}}(i\mapsto\varphi_i)_{i\in\{0,1\}}$ is a forest formula over $\Sigma$.
\end{exa}
\subsubsection{Semantics.}
For the semantics, tree formulas are evaluated on trees and forest formulas are
evaluated on forests. In both cases, $t\models\varphi$ denotes the fact that 
the structure $t$ (whether tree or forest) satisfies the formula $\varphi$.
This will not introduce ambiguity since although every forest formula is a tree
formula, but on trees, the two evaluation semantics coincide:
\begin{itemize}
	\item Every forest satisfies $\top$ and no forest satisfies $\bot$.
	\item The tree $b(s)$ satisfies $a\in\Sigma$ iff $a=b$.
	\item The tree $t$ satisfies the forest formula $\varphi$ if $t$, viewed as a forest consisting
	of a single tree, satisfies $\varphi$.
	\item Boolean connectives are handled as usual.
	\item The forest $s$ satisfies $L(\varphi_\delta)_{\delta\in\Delta}$ iff
	the characteristic forest of $s$ given by $(\varphi_\delta)_{\delta\in\Delta}$
	(defined below) belongs to $L$.
\end{itemize}

First we give an informal description of the characteristic forest. Given the $\Sigma$-forest $s$,
the forest $\widehat{s}$ is a $\Delta$-relabeling of $s$. For each node of $s$, one finds
the first $\delta\in\Delta$ such that the subtree of $s$ rooted at the node in question satisfies
$\varphi_\delta$; this $\delta$ is the label of the node in $\widehat{s}$.

Formally, given a forest $s$ over some alphabet $\Sigma$ and a family $(\varphi_\delta)_{\delta\in\Delta}$
of tree formulas over $\Sigma$, indexed by some alphabet $\Delta$,
we define the \emph{characteristic forest} $\widehat{s}\in F_\Delta$ of $s$
given by $(\varphi_\delta)_{\delta\in\Delta}$
by structural induction as follows: $\widehat{\mathbf{0}}=\mathbf{0}$,
$\widehat{t_1+\ldots+t_n}=\widehat{t_1}+\ldots+\widehat{t_n}$ and
$\widehat{a(s)}=b(\widehat{s})$ where $b\in\Delta$ is the \emph{first} symbol of $\Delta$ with
$a(s)\models\varphi_b$. If there is no such letter, then $b$ is the last symbol of $\Delta$.

Note that although we assume each alphabet comes with a fixed linear ordering, but this
restriction does not have any impact on the expressive power of the logics.
In fact, we can define to each $b\in\Delta$ another formula $\psi_b$ as
$\psi_b=\mathop\bigwedge\limits_{c\neq b}\neg\varphi_c$ if $b$ is the last symbol of $\Delta$ and
$\psi_b=\varphi_b\wedge\mathop\bigwedge\limits_{c<b}\neg \varphi_c$ otherwise; then,
the resulting family $(\psi_b)_{b\in\Delta}$ is \emph{deterministic} in the sense
that for any tree $t\in T_\Sigma$ there exists exactly one symbol $b\in\Delta$ with
$t\models\psi_b$, and the characteristic forests of any forest $s$
given by the two families $(\varphi_\delta)_{\delta\in\Delta}$ and
$(\psi_\delta)_{\delta\in\Delta}$ coincide.
Thus, the particular ordering of $\Delta$
is not important (we have to choose: either to syntactically restrict the allowed formulas,
or to assume an ordering of $\Delta$, or to have a family $I$ of formulas along with
a function from $P(I)\to\Delta$, or something similar to resolve ambiguities, but in all cases,
the class of definable languages is the same).

\begin{exa}
	\label{example-logic-semantics}
	Consider the forest $s$ from Example~\ref{example-forest} over $\Sigma=\{a,b,c,d\}$
	and the formula $\varphi=L_{\mathrm{EX}}(0\mapsto a\vee c,1\mapsto b\vee c)$ from Example~\ref{example-ex-syntax}. Then, a $\Sigma$-tree satisfies $\varphi_0=a\vee c$
	iff its root symbol is labeled by either $a$ or $c$; and similarly, it satisfies
	$\varphi_1=b\vee c$ if its root is labeled by either $b$ or $c$.
	Now assuming $0<1$ in the ordering of $\{0,1\}$, the characteristic forest of $s$
	defined by $(\varphi_i)_{i\in\{0,1\}}$ is
	
	{\centering
		\hfil$s=$\quad 
		\begin{tikzpicture}[scale=0.7,thick,every node/.style={transform shape}]
		\node[circle] (root) at (0,0) [draw] {$d$};
		\node[circle] (root2) at (2.1,0) [draw] {$c$};
		\node[circle] (node1) at (-1,-1) [draw] {$b$};
		\node[circle] (node2) at (1,-1) [draw] {$a$};
		\node[circle] (node11) at (-1,-2) [draw] {$a$};
		\node[circle] (node21) at (0.3,-2) [draw] {$d$};
		\node[circle] (node22) at (1,-2) [draw] {$a$};
		\node[circle] (node23) at (1.7,-2) [draw] {$b$};
		\path (root) edge (node1);
		\path (root) edge (node2);
		\path (node1) edge (node11);
		\path (node2) edge (node21);
		\path (node2) edge (node22);
		\path (node2) edge (node23);
		\end{tikzpicture}		
		\hfil $\widehat{s}=$\quad
		\begin{tikzpicture}[scale=0.7,thick,every node/.style={transform shape}]
		\node[circle] (root) at (0,0) [draw] {$1$};
		\node[circle] (root2) at (2.1,0) [draw] {$0$};
		\node[circle] (node1) at (-1,-1) [draw] {$1$};
		\node[circle] (node2) at (1,-1) [draw] {$0$};
		\node[circle] (node11) at (-1,-2) [draw] {$0$};
		\node[circle] (node21) at (0.3,-2) [draw] {$1$};
		\node[circle] (node22) at (1,-2) [draw] {$0$};
		\node[circle] (node23) at (1.7,-2) [draw] {$1$};
		\path (root) edge (node1);
		\path (root) edge (node2);
		\path (node1) edge (node11);
		\path (node2) edge (node21);
		\path (node2) edge (node22);
		\path (node2) edge (node23);
		\end{tikzpicture} 
		
	}
	Indeed, due to the ordering of $\{0,1\}$ nodes labeled by either $a$ or $c$ are
	relabeled to $0$; then, nodes labeled by $b$ are relabeled to $1$ since those
	subtrees satisfy $\varphi_1$; and also, nodes labeled by $d$ are also relabeled
	to $1$ since that's the last symbol of $\{0,1\}$ and the subtree does not satisfy
	either one of $\varphi_0$ or $\varphi_1$. Clearly, $L_{\mathrm{EX}}(0\mapsto a\vee c,1\mapsto\neg (a\vee c))$ is an equivalent formula.	
	Since $\widehat{s}$ is a member of $L_{\mathrm{EX}}$, we get that $s\models \varphi$.
	The language defined by $\varphi$ consists of those forests having at least one
	depth-one node labeled by a $b$ or a $d$.
\end{exa}

For a class $\mathcal{L}$ of modalities, let $\mathbf{FL}(\mathcal{L})$ denote the class
of all languages definable in $\mathrm{FL}(\mathcal{L})$.

\section{Closure properties of $\mathbf{FL}(\mathcal{L})$}
It is clear that if $K$ and $L$ are forest languages definable in $\mathrm{FL}(\mathcal{L})$ for
some $\mathcal{L}$, then so are their Boolean combinations e.g. $K\cap L$ and $\overline{K}$,
since the logic has $\wedge$ and $\neg$, thus $\mathbf{FL}(\mathcal{L})$ is closed under
(finite) Boolean combinations.

When $\varphi$ is a forest formula over the alphabet $\Sigma$ and to each $a\in\Sigma$,
$\varphi_a$ is a forest formula (also over $\Sigma$),
then we define the forest formula $\varphi[a\mapsto\varphi_a]$ inductively as
\begin{align*}
\top[a\mapsto\varphi_a] &= \top,
&\bot[a\mapsto\varphi_a] &= \bot,\\
(\neg\psi)[a\mapsto\varphi_a] &= \neg(\psi[a\mapsto\varphi_a]),
&(\psi_1\wedge\psi_2)[a\mapsto\varphi_a] &= (\psi_1[a\mapsto\varphi_a]\wedge\psi_2[a\mapsto\varphi_a]),\\
b[a\mapsto\varphi_a] &= \varphi_b,
&L(\psi_b)_{b\in\Delta}[a\mapsto\varphi_a] &= L(\psi_b[a\mapsto\varphi_a])_{b\in\Delta},
\end{align*}
that is, we replace each subformula of the form $a\in\Sigma$ of $\varphi$ by $\varphi_a$.

A \emph{literal homomorphism} of forests defined by a mapping $h:\Sigma\to\Delta$
maps a forest $s\in F_\Sigma$ to $h(s)\in F_\Delta$ given inductively as
$h(\mathbf{0})=\mathbf{0}$, $h(a(s'))=(h(a))(h(s'))$ and $h(t_1+\ldots+t_n)=h(t_1)+\ldots+h(t_n)$
(which is clearly a homomorphism). When $L\subseteq F_\Delta$ is a language
and $h:\Sigma\to\Delta$ is a mapping, then the \emph{inverse literal homomorphic image}
of $L$ is the language $h^{-1}(L)=\{s\in F_\Sigma:h(s)\in L\}$.

\begin{prop}
	$\mathbf{FL}(\mathcal{L})$ is closed under inverse literal homomorphisms.
\end{prop}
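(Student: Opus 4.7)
The plan is to produce, for any forest formula $\varphi$ over $\Delta$ defining $L$, a formula $\varphi'$ over $\Sigma$ defining $h^{-1}(L)$, by replacing each atomic letter $d\in\Delta$ appearing in $\varphi$ with a disjunction that ``decodes'' $d$ through $h$. Concretely, for each $d\in\Delta$ set $\psi_d=\bigvee_{a\in\Sigma,\, h(a)=d}a$ (and $\psi_d=\bot$ if the disjunction is empty); each $\psi_d$ is a tree formula over $\Sigma$. Then let $\varphi'=\varphi[d\mapsto\psi_d]$, extending the substitution from the excerpt in the obvious way so that $\Delta$-letters in $\varphi$ are replaced by $\Sigma$-formulas. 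Since substitution does not introduce any new modalities, $\varphi'$ is still in $\mathrm{FL}(\mathcal{L})$.

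The heart of the argument is the claim, to be proved by simultaneous structural induction on forest and tree formulas $\varphi$ over $\Delta$: for every $s\in F_\Sigma$, $s\models\varphi[d\mapsto\psi_d]$ iff $h(s)\models\varphi$, and analogously for trees. The atomic case is the whole point of the definition of $\psi_d$: a tree $a(s')$ satisfies $\psi_d$ exactly when $h(a)=d$, which is exactly when $h(a(s'))=d(h(s'))$ satisfies the atom $d\in\Delta$. The cases $\top$, $\bot$, and Boolean connectives are immediate from the inductive definition of the substitution.

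The only nontrivial step, and what I expect to be the main obstacle, is the modality case $\varphi=L'(\varphi_\delta)_{\delta\in\Delta'}$, where one must check that the characteristic $\Delta'$-forest of $s$ built with the substituted family $(\varphi_\delta[d\mapsto\psi_d])_{\delta\in\Delta'}$ coincides with the characteristic $\Delta'$-forest of $h(s)$ built with the original family $(\varphi_\delta)_{\delta\in\Delta'}$. The two forests share the underlying tree shape of $s$ (since $h$ is a literal homomorphism, $s$ and $h(s)$ differ only in node labels), and at each position both selection rules pick the first $\delta\in\Delta'$ whose associated tree formula is satisfied by the subtree rooted there; the tree-formula part of the induction hypothesis guarantees that these two conditions are equivalent node by node. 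Hence the two relabelings produce identical $\Delta'$-forests, and membership in $L'$ transfers. Applying the claim to $\varphi$ itself gives $s\models\varphi'$ iff $h(s)\in L$, i.e.\ iff $s\in h^{-1}(L)$, so $h^{-1}(L)\in\mathbf{FL}(\mathcal{L})$.
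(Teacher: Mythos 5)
Your proposal is correct and is essentially the paper's own proof: the paper likewise defines $h^{-1}(L)$ by the formula $\varphi[a\mapsto\bigvee_{h(b)=a}b]$ (with the empty disjunction read as $\bot$) and leaves the inductive verification implicit, whereas you spell it out, including the key check that the characteristic forests of $s$ and $h(s)$ coincide in the modality case. No discrepancies.
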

\begin{proof}
	Let $L$ be a $\Delta$-language in $\mathbf{FL}(\mathcal{L})$ defined by a formula $\varphi$ of
	$\mathrm{FL}(\mathcal{L})$ and $h:\Sigma\to\Delta$ be a mapping.
	Then $h^{-1}(L)$ is defined by $\varphi[a\mapsto\mathop\bigvee\limits_{h(b)=a}b]$
	(with the empty disjunction defined as $\bot$ of course).
\end{proof}

The following proposition states that one can use freely any definable language as a modality as well:
\begin{prop}
	\label{prop-closure}
	Assume $L$ is definable in $\mathrm{FL}(\mathcal{L})$.
	Then so is any language definable by a formula of the form $\varphi=L(\varphi_\delta)_{\delta\in\Delta}$
	with each $\varphi_\delta$ being a formula of $\mathrm{FL}(\mathcal{L})$.
\end{prop}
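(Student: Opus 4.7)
The plan is to exhibit a formula $\psi'$ of $\mathrm{FL}(\mathcal{L})$ over $\Sigma$ that defines the same language as $L(\varphi_\delta)_{\delta\in\Delta}$. Let $\psi$ be a formula of $\mathrm{FL}(\mathcal{L})$ over the alphabet $\Delta$ of $L$ that defines $L$. First, I would replace the family $(\varphi_\delta)_{\delta\in\Delta}$ by the deterministic family it induces via the construction in the remark just before the examples of the semantics; this determinization is only a Boolean combination of the $\varphi_\delta$, so it stays inside $\mathrm{FL}(\mathcal{L})$, and leaves the characteristic forest unchanged. Hence I may assume that for every $\Sigma$-tree $t$ there is exactly one $\delta\in\Delta$ with $t\models\varphi_\delta$, and that this $\delta$ is the root label of $\widehat{t}$.

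The candidate I propose is $\psi':=\psi[\delta\mapsto\varphi_\delta]$, using the substitution operation introduced just above the proposition. Since the substitution only replaces atomic letter formulas by tree formulas of $\mathrm{FL}(\mathcal{L})$ over $\Sigma$, and it leaves the modalities (which are members of $\mathcal{L}$) untouched, $\psi'$ is clearly a forest formula of $\mathrm{FL}(\mathcal{L})$ over $\Sigma$.

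The heart of the argument would be the following substitution lemma, provable by simultaneous structural induction on an arbitrary $\mathrm{FL}(\mathcal{L})$-formula $\chi$ over $\Delta$: for every $\Sigma$-forest $s$, $s\models\chi[\delta\mapsto\varphi_\delta]$ iff $\widehat{s}\models\chi$, and analogously for $\Sigma$-trees. The cases of $\top$, $\bot$ and the Boolean connectives are immediate, and the atomic case $\chi=b\in\Delta$ uses determinism: for a tree $t$, $t\models\varphi_b$ iff the root of $\widehat{t}$ carries the label $b$ iff $\widehat{t}\models b$.

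The main obstacle is the modality case $\chi=L'(\chi_\delta)_{\delta\in\Delta'}$ with $L'\in\mathcal{L}$. Here one must verify that the characteristic $\Delta'$-forest of $s$ with respect to the substituted family $(\chi_\delta[\delta'\mapsto\varphi_{\delta'}])_{\delta\in\Delta'}$ coincides, node by node, with the characteristic $\Delta'$-forest of $\widehat{s}$ with respect to $(\chi_\delta)_{\delta\in\Delta'}$. Both $\Delta'$-forests share the tree skeleton of $s$, so the task reduces to showing that for each subtree $a(s'')$ of $s$ and each $\delta\in\Delta'$, one has $a(s'')\models\chi_\delta[\delta'\mapsto\varphi_{\delta'}]$ iff $\widehat{a(s'')}\models\chi_\delta$; but this is exactly the inductive hypothesis applied to the tree formula $\chi_\delta$, noting that $\widehat{a(s'')}$ is the subtree of $\widehat{s}$ corresponding to $a(s'')$. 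Consequently the two relabelings are equal as $\Delta'$-forests, their membership in $L'$ agrees, and the induction goes through. Applying the lemma to $\chi=\psi$ finally yields $s\models\psi'$ iff $\widehat{s}\models\psi$ iff $\widehat{s}\in L$ iff $s\models L(\varphi_\delta)_{\delta\in\Delta}$, so $\psi'$ defines the language in question.
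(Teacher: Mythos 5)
Your proposal is correct and follows essentially the same route as the paper: the paper also first assumes the family $(\varphi_\delta)_{\delta\in\Delta}$ is deterministic and then performs a structural induction on the formula $\psi$ defining $L$, producing exactly the substituted formula $\psi[\delta\mapsto\varphi_\delta]$ in the modality case. You merely make explicit (as a substitution lemma, $s\models\chi[\delta\mapsto\varphi_\delta]$ iff $\widehat{s}\models\chi$) the equivalence that the paper's terse induction leaves implicit.
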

\begin{proof}
	We use induction on the structure of the forest formula $\psi$ defining the language $L$.
	Without loss of generality we may assume that the family $(\varphi_\delta)_{\delta\in\Delta}$
	is deterministic.
	
	If $\psi=\top$ or $\bot$, then $\varphi$ is equivalent to $\top$ or $\bot$, respectively.
	The case of the Boolean connectives $\psi=\neg\psi_1$ and $\psi=\psi_1\wedge\psi_2$
	is also clear: applying the induction hypothesis we get that there is a formula
	$\varphi_i$ of $\mathrm{FL}(\mathcal{L})$ equivalent to $\psi_i$, thus $\neg\varphi_1$,
	$\varphi_1\wedge\varphi_2$ are then formulas of $\mathrm{FL}(\mathcal{L})$, respectively,
	equivalent to $\varphi$.
	
	Finally, assume $\psi=K(\psi_\gamma)_{\gamma\in\Gamma}$ for some $\Gamma$-forest language
	$K\in\mathcal{L}$ and tree formulas $\psi_\gamma$, $\gamma\in\Gamma$ of $\mathrm{FL}(\mathcal{L})$.
	Then $K(\psi_\gamma[\delta\mapsto\varphi_\delta])_{\gamma\in\Gamma}$ is an $\mathrm{FL}(\mathcal{L})$-formula
	equivalent to $\varphi$.
\end{proof}
Since for any class $\mathcal{L}$ we also have $\mathcal{L}\subseteq\mathbf{FL}(\mathcal{L})$
(a language $L$ is defined by the formula $L(a\mapsto a)$)
and $\mathcal{L}\subseteq\mathcal{L}'$ clearly implies $\mathbf{FL}(\mathcal{L})\subseteq\mathbf{FL}(\mathcal{L}')$, along with Proposition~\ref{prop-closure}
we get that the transformation $\mathcal{L}\mapsto\mathbf{FL}(\mathcal{L})$ is a \emph{closure operator}.

\section{Facts and operations of forest automata}
Since the automaton model is complete and deterministic, thus
given $A=(Q,\Sigma,+,0,\cdot)$ over $\Sigma$ and $F\subseteq Q$, then
$L(A,F)=F_\Sigma - L(A,Q-F)$.
Also, the \emph{direct product} $A$ of the forest automata $A_i=(Q_i,\Sigma,+_i,0_i,\cdot_i)$,
$i\in I$
over the same alphabet $\Sigma$ for some index set $I$ is defined as
$\mathop\prod\limits_{i\in I}A_i=(Q,\Sigma,+,0,\cdot)$ with
$(Q,+,0)$ being the direct product of the monoids $(Q_i,\Sigma,+_i,0_i)$
and $a\cdot(q_i)_{i\in I}=(a\cdot_i q_i)_{i\in I}$ which is finite if so are 
each $A_i$ and $I$.
It is clear that if to each $i\in I$ we also have a set $F_i\subseteq Q_i$
of final states, then $A$ recognizes $\mathop\bigcap\limits_{i\in I}L(A_i,F_i)$
with the set $\mathop\prod\limits_{i\in I}F_i$ of final states. 

The automaton $A'=(Q',\Sigma,+',0,\cdot')$ is a \emph{subautomaton} of $A=(Q,\Sigma,+,0,\cdot)$
if $(Q',+',0)$ is a submonoid of $(Q,+,0)$ and $a\cdot' q=a\cdot q$
for each $a\in\Sigma$ and $q\in Q'$ (that is, $Q'\subseteq Q$ is closed under
the addition and the action, and $+'$ and $\cdot'$ are the restrictions of the
operations onto $Q'$). The \emph{connected part} of $A$ is its smallest subautomaton
(which is generated by the state $0$). An automaton is \emph{connected} if it has no
proper subautomata. Clearly, if $F\subseteq Q'$, then $L(A',F)=L(A,F)$ in this case.

The $\Delta$-automaton $A'=(Q,\Delta,+,0,\cdot')$ is a \emph{renaming} of the $\Sigma$-automaton
$A=(Q,\Sigma,+,0,\cdot)$ if for each $\delta\in\Delta$ there exists some $h(\delta)=\sigma\in\Sigma$
with $\delta\cdot'q=\sigma\cdot q$ for each state $q\in Q$.
It is straightforward to check that if $F\subseteq Q$, then $L(A',F)=h^{-1}(L(A,F))$ in this
case.

Given $A=(Q,\Sigma,+,0,\cdot)$ and $A'=(Q',\Sigma,+',0',\cdot')$,
a \emph{homomorphism} from $A$ to $A'$ is a mapping $h:Q\to Q'$ respecting the
operations: $h(0)=0'$ and $h(p+q)=h(p)+'h(q)$, $h(a\cdot q)=a\cdot'h(q)$ for each
$p,q\in Q$ and $a\in\Sigma$.
It is a routine matter to check that if in this case $F'\subseteq Q'$ and
$F=h^{-1}(F')$, then $L(A,F)=L(A',F')$.
If the homomorphism is onto, then $A'$ is a \emph{homomorphic image} of $A$,
and homomorphic images of subautomata of $A$ are called \emph{quotients} of $A$.
Clearly, the mapping $s\mapsto s^A$ is a homomorphism from $F_\Sigma$ to $A$ which is onto
if and only if $A$ is connected. 

A \emph{congruence} of $A=(Q,\Sigma,+,0,\cdot)$ is an equivalence relation $\Theta\subseteq Q^2$
such that whenever $p_1\Theta p_2$ and $q_1\Theta q_2$, then $(p_1+p_2)\Theta(q_1+q_2)$,
and $(a\cdot p_1)\Theta (a\cdot p_2)$ as well. Then, the \emph{factor automaton}
$A/\Theta=(Q/\Theta,\Sigma,+/\Theta,0/\Theta,\cdot/\Theta)$ defined by
$p/\Theta=\{q\in Q:p\Theta q\}$ standing for the class of $p$,
$X/\Theta=\{p/\Theta:p\in X\}$ for each $X\subseteq Q$,
$p/\Theta+ q/\Theta=(p+q)/\Theta$ and $a\cdot( p/\Theta)=(a\cdot p)/\Theta$
is a well-defined automaton and is a homomorpic image of $A$ via the mapping
$q\mapsto q/\Theta$.

Given any forest automaton $A=(Q,\Sigma,+,0,\cdot)$ and a subset $F\subseteq Q$
of its states, there is a \emph{minimal} forest automaton $A_L$ of the language $L=L(A,F)$,
unique up to isomorphism, which is a quotient of $A$ (and of any forest automaton
recognizing $L$). Moreover, $A_L$ can be effectively constructed from $A$ in polynomial time.

%

\section{General algebraic characterization of $\mathbf{FL}(\mathcal{L})$ by the Moore product}

In this section we show that there exists an Eilenberg-type correspondence between
language classes of the form $\mathbf{FL}(\mathcal{L})$ and pseudovarieties of finite forest
automata, closed additionally under an operation which we call the Moore product
(provided $\mathcal{L}$ satisfies a natural property).
The correspondence and the Moore product itself is the analog of 
the operation with the same name defined in~\cite{DBLP:journals/fuin/EsikI08} for
ranked trees. We think that the usage of forest automata instead of strictly ranked
universal algebras (i.e., tree automata) gives a clearer view on the connection of
the Moore product and the $L(\delta\mapsto\varphi_\delta)$-construct, defined originally
in~\cite{DBLP:journals/tcs/Esik06a} for temporal logics on \emph{trees}.

Given a forest automaton $A_1=(Q_1,\Sigma,+_1,0_1,\cdot_1)$ over some alphabet $\Sigma$ and a forest automaton $A_2=(Q_2,\Delta,+_2,0_2,\cdot_2)$ over
some alphabet $\Delta$ along with a \emph{control function} $\alpha:\Sigma\times Q_1\to\Delta$, the
\emph{Moore product} of $A_1$ and $A_2$ defined by $\alpha$ is the $\Sigma$-forest automaton $A_1\times_\alpha A_2=(Q_1\times Q_2,\Sigma,+,0,\cdot)$
with $(Q_1\times Q_2,+,0)$ being the ordinary direct product of the two horizontal monoids and
$a\cdot(p,q)=(a\cdot_1 p,b\cdot_2 q)$ with $b=\alpha(a,a\cdot_1 p)$.

For a class $\mathbf{K}$ of finite forest automata, let $\langle\mathbf{K}\rangle_M$ stand for the Moore pseudovariety of finite forest
automata generated by $\mathbf{K}$, i.e. the smallest class of finite forest automata which contains $\mathbf{K}$ and is closed under
homomorphic images, renamings, subautomata and Moore products.

Then the following holds:
\begin{prop}
	Let $\varphi=L(\varphi_\delta)_{\delta\in\Delta}$ be a formula over $\Sigma$ defining the $\Sigma$-forest language $L_\varphi$,
	with each $\varphi_\delta$, $\delta\in\Delta$
	defining the $\Sigma$-forest language $L_\delta$. Assume each $L_\delta$ is recognizable in the forest automaton $A_\delta$ and
	$L$ is recognizable in $A$. Then $L_\varphi$ is recognizable in some Moore product $A'=\bigl(\mathop\prod\limits_{\delta\in\Delta}A_\delta\bigr)\times_\alpha A$.
\end{prop}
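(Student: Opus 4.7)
The plan is to take the direct product $B=\mathop\prod\limits_{\delta\in\Delta}A_\delta$ as the first factor and $A$ itself as the second factor of the Moore product, using a control function that reads off from the $B$-state which of the tree formulas $\varphi_\delta$ the current subtree satisfies. Writing $F_\delta\subseteq Q_\delta$ for fixed final-state sets with $L(A_\delta,F_\delta)=L_\delta$, and $F\subseteq Q$ with $L(A,F)=L$, I would define $\alpha:\Sigma\times\mathop\prod\limits_{\delta\in\Delta}Q_\delta\to\Delta$ by letting $\alpha(a,(q_\delta)_{\delta\in\Delta})$ be the first $\delta\in\Delta$ with $q_\delta\in F_\delta$, or the last symbol of $\Delta$ if no such $\delta$ exists; note that $\alpha$ does not actually depend on $a$. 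Without loss of generality the family $(\varphi_\delta)_{\delta\in\Delta}$ may be taken deterministic, but the definition above works in any case.

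The main work is then to establish, by structural induction on $s\in F_\Sigma$, the invariant
\[
s^{A'} = (s^B,\widehat{s}^A),
\]
where $A'=B\times_\alpha A$ and $\widehat{s}$ is the characteristic forest of $s$ given by $(\varphi_\delta)_{\delta\in\Delta}$. The empty-forest and sum cases are immediate because the horizontal monoid of $A'$ is the direct product of those of $B$ and $A$. In the tree case $s=a(s')$ the design of $\alpha$ is what matters: the Moore-product transition gives $s^{A'}=(a\cdot s'^B,\, b\cdot\widehat{s'}^A)$ with $b=\alpha(a,a\cdot s'^B)$, and the first component equals $s^B=(s^{A_\delta})_{\delta\in\Delta}$ componentwise; hence $s^{A_\delta}\in F_\delta$ iff $s\models\varphi_\delta$, so $b$ is exactly the letter that $\widehat{s}$ assigns to its root, which yields $b\cdot\widehat{s'}^A=\widehat{s}^A$ and closes the induction.

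From the invariant, choosing the final-state set $\bigl(\mathop\prod\limits_{\delta\in\Delta}Q_\delta\bigr)\times F$ for $A'$ yields $s\in L(A',\cdot)$ iff $\widehat{s}^A\in F$ iff $\widehat{s}\in L$ iff $s\models\varphi$, which is exactly $s\in L_\varphi$. I do not expect a serious obstacle here: the only substantive design decision is to arrange $\alpha$ so that the Moore product mimics one relabeling step of the map $s\mapsto\widehat{s}$, after which everything reduces to routine bookkeeping. The only mildly delicate point is noticing that $\alpha$ must be applied to $a\cdot_1 p$ rather than to $p$, so that it reflects the state at the \emph{parent} node $a(s')$ and not at the forest $s'$ beneath it; with that in place, the tree case of the induction matches the recursive clause $\widehat{a(s')}=b(\widehat{s'})$ verbatim.
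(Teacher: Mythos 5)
Your proposal is correct and follows essentially the same route as the paper: the same control function $\alpha$ picking the first $\delta$ with $q_\delta\in F_\delta$ (else the last symbol of $\Delta$), the same invariant $s^{A'}=(s^B,\widehat{s}^A)$, and the same choice of final states. The only difference is that you spell out the structural induction that the paper dismisses as ``straightforward to check,'' correctly identifying that $\alpha$ must be evaluated at $a\cdot_1 p$ so that the product mimics the relabeling clause $\widehat{a(s')}=b(\widehat{s'})$.
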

\begin{proof}
	Let $L_\delta$ be $L(A_\delta,F_\delta)$ and $L=L(A,F)$. We define the control function $\alpha$ as
	\[
	\alpha(\sigma,(q_\delta)_{\delta\in\Delta})=\begin{cases}
	\hbox{the first }\delta\in\Delta\hbox{ such that }q_\delta\in F_\delta\hbox{ if there is such a }\delta\hbox{ at all};\\
	\hbox{the last element of }\Delta\hbox{ otherwise.}
	\end{cases}
	\]
	It is straightforward to check that for any forest $s$, the value of $s$ in this product automaton $A'$ is $((q_\delta)_{\delta\in\Delta},q)$
	where $q_\delta=s^{A_\delta}$ and $q=\hat{s}^A$ where $\hat{s}$ is the characteristic forest of $s$ with respect to the family
	$(\varphi_\delta)_{\delta\in\Delta}$, thus setting the final states to $F'=\bigl(\mathop\prod\limits_{\delta\in\Delta}Q_\delta\bigr)\times F$
	we get that $L_\varphi=L(A',F')$.
\end{proof}
For the reverse direction it suffices to show the following:
\begin{prop}
	Assume $A=(Q,\Sigma,+,0,\cdot)$ and $A'=(Q',\Delta,+',0',\cdot')$ are $\Sigma$- and $\Delta$-forest automata, respectively,
	such that every language recognizable in them is a member of $\mathbf{FL}(\mathcal{L})$ for the language class $\mathcal{L}$.
	Then every language recognizable in any Moore product of the form $A\times_\alpha A'$ is also a member of $\mathbf{FL}(\mathcal{L})$.
\end{prop}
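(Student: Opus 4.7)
The plan is to reduce every language recognizable in $A \times_\alpha A'$ to a Boolean combination of languages of two kinds: those recognizable in $A$, and those of the form ``the characteristic relabeling of $s$ belongs to some language recognizable in $A'$''. The first kind is in $\mathbf{FL}(\mathcal{L})$ by hypothesis; for the second, I would use Proposition~\ref{prop-closure} to lift a modality from $\mathbf{FL}(\mathcal{L})$ back into the logic.

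First, a routine structural induction on forests (using the definitions of the Moore product's horizontal monoid and its action) shows that for every forest $s \in F_\Sigma$ its value in $A \times_\alpha A'$ is the pair $(s^A, \widehat{s}^{A'})$, where $\widehat{s} \in F_\Delta$ is the relabeling obtained by replacing, at every node, the original label $a$ of the subtree $t$ rooted there with $\alpha(a, t^A) \in \Delta$. Consequently, for any $F' \subseteq Q \times Q'$,
\[
L(A \times_\alpha A', F') \;=\; \bigcup_{(p,q) \in F'} \bigl(\{s : s^A = p\} \cap \{s : \widehat{s} \in L(A', \{q\})\}\bigr),
\]
and it suffices to show each conjunct lies in $\mathbf{FL}(\mathcal{L})$ and then invoke closure under Boolean combinations.

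Next, I would realise the relabeling $s \mapsto \widehat{s}$ as a characteristic forest in the sense of the logic. For each $p \in Q$ the language $L(A, \{p\})$ is in $\mathbf{FL}(\mathcal{L})$ by hypothesis; let $\psi_p$ be a forest formula defining it. Setting
\[
\varphi_\delta \;:=\; \bigvee_{(a,p)\,:\,\alpha(a,p) = \delta} (a \wedge \psi_p) \qquad (\delta \in \Delta),
\]
one checks that the family $(\varphi_\delta)_{\delta \in \Delta}$ is deterministic and that a tree $a(s_0)$ satisfies $\varphi_\delta$ iff $\alpha(a, a(s_0)^A) = \delta$. Hence the characteristic forest of $s$ with respect to $(\varphi_\delta)_{\delta \in \Delta}$ is exactly $\widehat{s}$, and the language $\{s : \widehat{s} \in L(A', \{q\})\}$ is defined by the formula $L(A', \{q\})(\varphi_\delta)_{\delta \in \Delta}$.

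The main, and essentially only, obstacle is that the outer ``modality'' $L(A', \{q\})$ is guaranteed only to lie in $\mathbf{FL}(\mathcal{L})$, not in $\mathcal{L}$ itself. This is precisely what Proposition~\ref{prop-closure} handles: any language already in $\mathbf{FL}(\mathcal{L})$ may serve as the outer modality over tree formulas of $\mathrm{FL}(\mathcal{L})$. Once that step is in place, the second conjunct belongs to $\mathbf{FL}(\mathcal{L})$, and Boolean closure finishes the argument.
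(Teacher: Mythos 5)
Your proposal is correct and follows essentially the same route as the paper: decompose $L(A\times_\alpha A',F')$ into a union over final pairs $(p,q)$ of a conjunction of a formula for $L(A,\{p\})$ with the formula $L(A',\{q\})(\varphi_\delta)_{\delta\in\Delta}$, where $\varphi_\delta=\bigvee_{\alpha(a,p)=\delta}(a\wedge\psi_p)$, and then invoke Proposition~\ref{prop-closure} to replace the outer modality $L(A',\{q\})\in\mathbf{FL}(\mathcal{L})$ by a genuine $\mathrm{FL}(\mathcal{L})$-formula. The only difference is that you spell out explicitly the (correct) induction showing that the Moore product computes $(s^A,\widehat{s}^{A'})$, which the paper leaves implicit.
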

\begin{proof}
	To each $q\in Q$ let $\varphi_q$ be the $\Sigma$-formula of $\mathrm{FL}(\mathcal{L})$ defining the language $L(A,\{q\})$,
	and to each $q'\in Q'$ let $\psi_{q'}$ be the $\Delta$-formula defining $L(A',\{q'\})$. It suffices to show that
	each language $L(A\times_\alpha A',\{(q,q')\})$ is definable in the logic by some formula $\varphi_{q,q'}$
	(since then $L(A\times_\alpha A',F)$ is definable by $\mathop\bigvee\limits_{(q,q')\in F}\varphi_{q,q'}$).
	Consider the formula $\varphi_q\wedge L_{q'}(\varphi_\delta)_{\delta\in\Delta}$ where $\varphi_\delta=\mathop\bigvee\limits_{\alpha(\sigma,p)=\delta}\sigma\wedge\varphi_p$.
	This formula defines the language $L(A\times_\alpha A',\{(q,q')\})$ and
	by Proposition~\ref{prop-closure}, there is an equivalent $\mathrm{FL}(\mathcal{L})$-formula
	since by assumption each $L_{q'}$ is definable in $\mathrm{FL}(\mathcal{L})$.
\end{proof}

Implying,
\begin{thm}
	\label{theorem-characterization-general}
	Suppose $\mathcal{L}$ is a class of regular forest languages and $\mathbf{K}$ is
	a class of forest automata such that i) each member of $\mathcal{L}$ is recognizable by
	some member of $\mathbf{K}$ and ii) every language recognizable by some member of $\mathbf{K}$
	is a member of $\mathbf{FL}(\mathcal{L})$.
	
	Then the following are equivalent to any regular forest language $L$:
	\begin{itemize}
		\item $L$ is definable in $\mathrm{FL}(\mathcal{L})$.
		\item The minimal forest automaton of $L$ belongs to $\langle\mathbf{K}\rangle_M$.
	\end{itemize}
\end{thm}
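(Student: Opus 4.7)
Both implications reduce to the two propositions just established, together with elementary closure properties of $\langle\mathbf{K}\rangle_M$ on one side and of $\mathbf{FL}(\mathcal{L})$ on the other.

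\emph{Definability implies the minimal automaton lies in $\langle\mathbf{K}\rangle_M$.} I would proceed by structural induction on a formula $\varphi$ of $\mathrm{FL}(\mathcal{L})$ defining $L$, showing at each stage that some recognizer of the defined language lies in $\langle\mathbf{K}\rangle_M$. The base cases $\top$, $\bot$ and atomic $a\in\Sigma$ are dispatched by explicit small automata, which sit in $\langle\mathbf{K}\rangle_M$ as quotients and renamings of an arbitrary fixed element of $\mathbf{K}$ (a harmless assumption $\mathbf{K}\neq\emptyset$). Boolean connectives reduce to direct products, which are themselves a special case of the Moore product (take $\Delta=\Sigma$ and $\alpha(a,p)=a$, so that $a\cdot(p,q)=(a\cdot_1 p, a\cdot_2 q)$). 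The crucial case is a modality $\varphi=L(\varphi_\delta)_{\delta\in\Delta}$ with $L\in\mathcal{L}$: assumption (i) supplies a recognizer of $L$ in $\mathbf{K}$, the induction hypothesis supplies recognizers in $\langle\mathbf{K}\rangle_M$ for the constituent formulas $\varphi_\delta$, and the first proposition above then exhibits a recognizer of $L_\varphi$ as a Moore product of $\langle\mathbf{K}\rangle_M$-members. Because the minimal automaton $A_L$ is a quotient of every recognizer of $L$, and $\langle\mathbf{K}\rangle_M$ is closed under subautomata and homomorphic images (hence under quotients), we obtain $A_L\in\langle\mathbf{K}\rangle_M$.

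\emph{Minimal automaton in $\langle\mathbf{K}\rangle_M$ implies definability.} Define
\[
\mathbf{K}^{*} = \{A : \text{every forest language recognized by } A \text{ belongs to } \mathbf{FL}(\mathcal{L})\}.
\]
Assumption (ii) gives $\mathbf{K}\subseteq\mathbf{K}^{*}$, so it suffices to show that $\mathbf{K}^{*}$ is itself closed under the four operations generating $\langle\mathbf{K}\rangle_M$. Closure under Moore products is exactly the content of the second proposition above. Closure under subautomata and under homomorphic images follows from the identities $L(A',F)=L(A,F)$ and $L(A',F')=L(A,h^{-1}(F'))$ recorded in the preliminaries: any language recognized in the smaller or image automaton is already recognized in the ambient automaton and therefore lies in $\mathbf{FL}(\mathcal{L})$. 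Closure under renamings passes through $L(A',F)=h^{-1}(L(A,F))$ and the previously established closure of $\mathbf{FL}(\mathcal{L})$ under inverse literal homomorphisms. Thus $\langle\mathbf{K}\rangle_M\subseteq\mathbf{K}^{*}$, and any language whose minimal automaton lies in $\langle\mathbf{K}\rangle_M$ is defined by some $\mathrm{FL}(\mathcal{L})$-formula.

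\emph{Anticipated difficulty.} The substantive content of the theorem is essentially packaged into the two preceding propositions; what remains is bookkeeping. The only slightly delicate point is a uniform treatment of the base cases of the forward induction — exhibiting concrete forest automata for $\top$, $\bot$ and atomic $a\in\Sigma$ and certifying their membership in $\langle\mathbf{K}\rangle_M$. I expect this to be cleanest as a short preliminary observation that any trivial one-state automaton and the canonical small recognizers of the atomic languages arise as quotients (or renamings of quotients) of any fixed member of $\mathbf{K}$.
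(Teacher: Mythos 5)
Your overall plan --- reduce both implications to the two preceding propositions plus closure bookkeeping --- is exactly the route the paper takes (it derives the theorem from those propositions with the single word ``Implying''), and your backward direction is correct as written: the class $\mathbf{K}^{*}$ is closed under subautomata, homomorphic images and renamings by the identities recorded in the preliminaries together with closure of $\mathbf{FL}(\mathcal{L})$ under inverse literal homomorphisms, and closed under Moore products by the second proposition.

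The forward direction, however, has a genuine gap at precisely the point you flag as ``slightly delicate'': the atomic tree formulas $a\in\Sigma$. Quotients and renamings of a fixed member of $\mathbf{K}$ can only shrink that automaton, so the only thing they are guaranteed to yield is the trivial automaton, which recognizes just $\emptyset$ and $F_\Sigma$ --- enough for $\top$ and $\bot$ but not for the atoms. Worse, in general \emph{no} member of $\langle\mathbf{K}\rangle_M$ recognizes a language agreeing on single-tree forests with $\{a(s):s\in F_\Sigma\}$. Concretely, take $\mathcal{L}=\{L_{\mathrm{EF}}\}$ and $\mathbf{K}=\{\mathrm{EF}\}$, which satisfy hypotheses (i) and (ii). Every connected member $A$ of $\langle\mathrm{EF}\rangle_M$ is a semilattice automaton satisfying $ax\leq x$, so for the trees $t_0=1(\mathbf{0})$ and $t_{k+1}=a_{k+1}(t_k)$ with alternating root labels $a_{k+1}\in\{0,1\}$ the values $t_{k+1}^A=a_{k+1}\cdot t_k^A\leq t_k^A$ form a decreasing chain, whence $t_k^A=t_{k+1}^A$ for some $k$; since $t_k$ and $t_{k+1}$ carry different root labels, $A$ cannot separate ``root labelled $1$'' from ``root labelled $0$''. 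Yet your induction must process $L_{\mathrm{EF}}(0\mapsto 0,1\mapsto 1)$, whose tree formulas are exactly these atoms, so the step ``the induction hypothesis supplies recognizers for the constituent formulas $\varphi_\delta$'' fails there. The missing idea is that the root-letter tests must not be delegated to a factor automaton but absorbed into the control function: $\alpha$ is fed the pair $(\sigma,\sigma\cdot_1 p)$, i.e.\ both the root label of the current subtree and its value in the first factor, and this is precisely why the Moore product is defined with $\sigma$ as an argument of $\alpha$. One first rewrites each tree formula $\varphi_\delta$ as $\bigvee_{a\in\Sigma}(a\wedge\psi_{\delta,a})$ with the $\psi_{\delta,a}$ forest formulas (to which the induction hypothesis does apply), takes the direct product of their recognizers as the first factor, and lets $\alpha(\sigma,\cdot)$ compute the characteristic label from $\sigma$ together with the membership data it reads off that product. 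With this repair the remainder of your forward argument (Boolean connectives via direct products, which are indeed Moore products with $\alpha(a,p)=a$; minimality via closure under quotients) goes through.
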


\section{Two fragments of the logic CTL}
In this section we define the two modalities of CTL we worked with, first from the 
logical perspective.
\begin{defi}
	Given a alphabet $\Sigma$, the set of TL[EF,AF]-formulas is the least set satisfying
	the following conditions:
	\begin{itemize}
		\item Each $a\in\Sigma$ is a tree formula of TL[EF,AF].
		\item Boolean combinations of tree formulas are tree formulas.
		\item $\top$ and $\bot$ are forest formulas.
		\item If $\varphi$ is a tree formula, then $\mathrm{AF}(\varphi)$
			and $\mathrm{EF}(\varphi)$ are forest formulas.
		\item Boolean combinations of forest formulas are forest formulas.
		\item Every forest formula is a tree formula as well.
	\end{itemize}
\end{defi}
The semantics of the modalities is defined as follows.
\begin{defi}
	A tree $t$ satisfies a tree formula $\varphi$ of TL[EF,AF], denoted $t\models\varphi$
	if one of the following conditions hold:
	\begin{itemize}
		\item $t=a(s)$ and $\varphi=a\in\Sigma$ for some forest $s$ and symbol $a$.
		\item $\varphi=\neg(\psi)$ and it is not the case that $t\models\psi$.
		\item $\varphi=(\psi_1\vee\psi_2)$ and either $t\models\psi_1$ or $t\models\psi_2$ (or both)
			hold.
		\item $\varphi$ is a forest formula, and $t$ (as a forest consisting of a single tree)
			satisfies $\varphi$.
	\end{itemize}
	A forest $s=t_1+\ldots+t_n$ satisfies a forest formula $\varphi$ of TL[EF,AF],
	also denoted $s\models\varphi$ if one of the following conditions hold:
	\begin{itemize}
		\item $\varphi=\top$.
		\item $\varphi=\neg(\psi)$ and it is not the case that $s\models\psi$.
		\item $\varphi=(\psi_1\vee\psi_2)$ and either $s\models\psi_1$ or $s\models\psi_2$ (or both)
hold.
		\item $\varphi=\mathrm{EF}(\psi)$ and there exists a subtree $t$ of $s$ with $t\models\psi$.
			More precisely, a forest $t_1+\ldots+t_n$ satisfies $\mathrm{EF}(\psi)$
			if there exists some $i\in[n]$ such that the tree $t_i$ satisfies $\mathrm{EF}(\varphi)$; where a tree $t=a(s)$ satisfies $\mathrm{EF}(\psi)$
			if either $t\models\psi$ or $s\models\mathrm{EF}(\psi)$ holds.
		\item In the case $\varphi=\mathrm{AF}(\psi)$, a tree $t=a(s)$ satisfies $\mathrm{AF}(\psi)$
			if either $t\models\psi$ or $s\models\mathrm{AF}(\psi)$, while a forest
			$s=t_1+\ldots+t_n$ satisfies $\mathrm{AF}(\psi)$ if $n>0$ and $t_i\models\mathrm{AF}(\psi)$ for each $i\in[n]$.
	\end{itemize}
\end{defi}
The subset of TL[EF,AF]-formulas not involving the AF modality is the set of TL[EF]-formulas,
while the subset not involving EF is the set of TL[AF]-formulas.

\section{Common Moore properties of EF and AF}
The minimal automaton of the forest language associated to the modality
EF and AF are the automata $\mathrm{EF}$ and $\mathrm{AF}$ from
Examples~\ref{example-ef-automaton} and~\ref{example-af-automaton} respectively.

Applying Theorem~\ref{theorem-characterization-general} we get the following:
\begin{thm}
	Let $L$ be a regular forest language and $A$ its minimal forest automaton. Then $L$ is definable\ldots
	\begin{itemize}
		\item \ldots in TL[EF] if and only if $A\in\langle\mathrm{EF}\rangle_M$;
		\item \ldots in TL[AF] if and only if $A\in\langle\mathrm{AF}\rangle_M$;
		\item \ldots in TL[EF+AF] if and only if $A\in\langle\mathrm{EF},\mathrm{AF}\rangle_M$.
	\end{itemize}
\end{thm}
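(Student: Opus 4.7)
The plan is to obtain the three items as three direct applications of Theorem~\ref{theorem-characterization-general}, with $(\mathcal{L},\mathbf{K})$ instantiated as $(\{L_{\mathrm{EF}}\},\{\mathrm{EF}\})$, $(\{L_{\mathrm{AF}}\},\{\mathrm{AF}\})$, and $(\{L_{\mathrm{EF}},L_{\mathrm{AF}}\},\{\mathrm{EF},\mathrm{AF}\})$, respectively. The preparatory step is a semantic equivalence: the three TL fragments in the statement define exactly the same languages as the corresponding $\mathbf{FL}$ classes. In one direction I would use the translations $\mathrm{EF}(\varphi)\;\equiv\;L_{\mathrm{EF}}(0\mapsto\neg\varphi,\,1\mapsto\varphi)$ and $\mathrm{AF}(\varphi)\;\equiv\;L_{\mathrm{AF}}(0\mapsto\neg\varphi,\,1\mapsto\varphi)$; in the other direction, any $\mathrm{FL}$-modality over the two-letter alphabet of $L_{\mathrm{EF}}$ or $L_{\mathrm{AF}}$ only tests failure of $\varphi_0$, since $1$ is the default label in the characteristic forest, and so the whole construct collapses to $\mathrm{EF}(\neg\varphi_0)$, respectively $\mathrm{AF}(\neg\varphi_0)$.

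Next I would verify the two hypotheses of Theorem~\ref{theorem-characterization-general}. Hypothesis~(i) is given by Examples~\ref{example-ef-automaton} and~\ref{example-af-automaton}, in which $\mathrm{EF}$ and $\mathrm{AF}$ are exhibited as forest automata recognising $L_{\mathrm{EF}}$ and $L_{\mathrm{AF}}$ with suitable sets of final states. Hypothesis~(ii) reduces to a finite enumeration, since each witness is a very small automaton. The two-state $\mathrm{EF}$ admits only the four languages $\emptyset$, $F_{\{0,1\}}$, $L_{\mathrm{EF}}$ and $\overline{L_{\mathrm{EF}}}$, all visibly in $\mathbf{FL}(\{L_{\mathrm{EF}}\})$. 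For the three-state $\mathrm{AF}$ I would describe the three singleton-state languages explicitly: state $2$ gives $\{\mathbf{0}\}=\neg\mathrm{AF}(\top)$, state $1$ gives $L_{\mathrm{AF}}=\mathrm{AF}(1)$, and state $0$ gives $\mathrm{AF}(\top)\wedge\neg\mathrm{AF}(1)$; Boolean combinations then cover all eight recognisable languages. For the combined case $\mathbf{K}=\{\mathrm{EF},\mathrm{AF}\}$ both verifications are inherited verbatim.

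With hypotheses~(i) and~(ii) in place for each of the three instantiations, Theorem~\ref{theorem-characterization-general} delivers the three biconditionals at once. I do not foresee any serious obstacle: the TL-to-$\mathrm{FL}$ translation is a routine induction on the formula structure, and the enumeration of languages recognisable in $\mathrm{EF}$ and $\mathrm{AF}$ is finite and explicit. The one point worth handling carefully is that the forward direction of the general theorem constructs products of the form $\bigl(\prod_\delta A_\delta\bigr)\times_\alpha A$; but a direct product of automata over a common alphabet is itself a Moore product (take the state-independent control $\alpha(\sigma,p)=\sigma$), so these constructions remain inside $\langle\mathbf{K}\rangle_M$ as required.
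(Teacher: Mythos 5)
Your proposal is correct and follows exactly the route the paper takes: the paper derives this theorem solely by invoking Theorem~\ref{theorem-characterization-general} with $\mathbf{K}=\{\mathrm{EF}\}$, $\{\mathrm{AF}\}$, $\{\mathrm{EF},\mathrm{AF}\}$ respectively, leaving the verification of hypotheses (i) and (ii) and the TL/$\mathrm{FL}$ semantic equivalence implicit. Your spelled-out translations $\mathrm{EF}(\varphi)\equiv L_{\mathrm{EF}}(0\mapsto\neg\varphi,1\mapsto\varphi)$, the collapse argument for the converse, and the finite enumeration of languages recognizable in the two- and three-state automata are all sound and simply make explicit what the paper treats as routine.
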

First we list several properties of these two automata which are preserved under renamings, homomorphic images
and Moore products, thereby being necessary conditions for the automaton to be a member of the 
Moore pseudovarieties.
\begin{prop}
	Every member of $\langle\mathrm{EF},\mathrm{AF}\rangle_M$ has a horizontal monoid
	which is a semilattice, i.e. $(Q,+,0)$ satisfying $x+y=y+x$ and $x+x=x$.
\end{prop}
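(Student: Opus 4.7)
The plan is a standard ``check the generators, check the operations'' argument: the class of finite forest automata whose horizontal monoid is a semilattice is a Moore pseudovariety containing both $\mathrm{EF}$ and $\mathrm{AF}$, hence contains $\langle\mathrm{EF},\mathrm{AF}\rangle_M$.

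First I would verify the base cases. The horizontal monoid of $\mathrm{EF}=(\{0,1\},\{0,1\},\vee,0,\vee)$ is $(\{0,1\},\vee,0)$, and $\vee$ is commutative and idempotent, so this is a semilattice. The horizontal monoid of $\mathrm{AF}=(\{0,1,2\},\{0,1\},\min,2,\cdot)$ is $(\{0,1,2\},\min,2)$, and $\min$ is likewise commutative and idempotent with neutral element $2$, so this is also a semilattice.

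Next I would show that the class $\mathbf{S}$ of finite forest automata whose horizontal monoid is a semilattice is closed under the four generating operations of a Moore pseudovariety. For \emph{renamings} this is immediate since a renaming leaves the horizontal monoid unchanged. For \emph{subautomata}, the horizontal monoid of a subautomaton is by definition a submonoid of the original, and submonoids of semilattices are semilattices (the identities $x+y=y+x$ and $x+x=x$ are inherited). For \emph{homomorphic images}, a surjective forest automaton homomorphism restricts to a surjective monoid homomorphism on the horizontal monoid, and the equational class of semilattices is closed under homomorphic images. For \emph{Moore products} $A_1\times_\alpha A_2$, the horizontal monoid is by definition the direct product of the horizontal monoids of $A_1$ and $A_2$, and a direct product of semilattices is a semilattice since commutativity and idempotency are preserved componentwise.

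Combining these observations, $\mathbf{S}$ is a Moore pseudovariety of finite forest automata, and since both $\mathrm{EF}$ and $\mathrm{AF}$ lie in $\mathbf{S}$, the minimality of $\langle\mathrm{EF},\mathrm{AF}\rangle_M$ gives $\langle\mathrm{EF},\mathrm{AF}\rangle_M\subseteq\mathbf{S}$, which is exactly the claim. There is no real obstacle here; the only thing to be mildly careful about is that the Moore product, despite its twisted action, still uses the plain direct product on the horizontal component, so no verification beyond ``direct product of semilattices is a semilattice'' is needed.
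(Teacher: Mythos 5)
Your proof is correct and follows exactly the same route as the paper's: verify that $\mathrm{EF}$ and $\mathrm{AF}$ have semilattice horizontal monoids and that this property is preserved under renamings, subautomata, quotients and Moore products (the paper merely states this is "straightforward to check", while you spell out the details, including the key observation that the Moore product uses the plain direct product on the horizontal component). No gaps.
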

\begin{proof}
	It is straightforward to check that $\mathrm{EF}$ and $\mathrm{AF}$ both have a semilattice horizontal monoid
	and that these properties are preserved under renamings, quotients and Moore products.
\end{proof}
We call forest automata having a semilattice horizontal monoid \emph{semilattice automata}.
To each semilattice automaton $A=(Q,\Sigma,+,0,\cdot)$ we associate the usual partial order $\leq$
on $Q\times Q$
defined as $x\leq y~\Leftrightarrow~x=x+y$, that is, we view the semilattices as meet-semilattices.
Then, in the partially ordered set $(Q,\leq)$ the element $x+y$ is the greatest lower bound (the infimum)
of the set $\{x,y\}$.

As the semilattice ordering $\leq$ determines the addition operation $+$ completely, one can
also depict finite semilattice automata as follows: first one draws the Hasse-diagram of the
partially ordered set $(Q,\leq)$, then draws the actions as arrows, just as for ordinary automata.
Figure~\ref{fig-automata-drawings} depicts $\mathrm{EF}$ and $\mathrm{AF}$. Clearly, the unit element
of the horizontal monoid (that is, the starting state) is always the largest element of the semilattice.
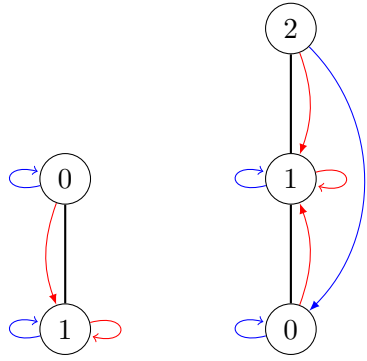
\begin{figure}[h]
	\begin{tikzpicture}
	\node[draw,circle] (1) at (0,0) {$1$};
	\node[draw,circle] (0) at (0,2) {$0$};
	\path[draw,thick] (0)--(1);
	\draw[bend right=20,-latex,red] (0) edge (1);
	\draw[loop left,-latex,blue] (0) edge (0);
	\draw[loop left,-latex,blue] (1) edge (1);
	\draw[loop right,-latex,red] (1) edge (1);
	
	\node[draw,circle] (af0) at (3,0) {$0$};
	\node[draw,circle] (af1) at (3,2) {$1$};
	\node[draw,circle] (af2) at (3,4) {$2$};
	\path[draw,thick] (af0)--(af1)--(af2);
	\draw[bend right=20,-latex,red] (af0) edge (af1);
	\draw[bend left=20,-latex,red] (af2) edge (af1);
	\draw[bend left=45,-latex,blue] (af2) edge (af0);
	\draw[loop left,-latex,blue] (af0) edge (af0);
	\draw[loop left,-latex,blue] (af1) edge (af1);
	\draw[loop right,-latex,red] (af1) edge (af1);	
	\end{tikzpicture}
	\caption{The automata $\mathrm{EF}$ and $\mathrm{AF}$. Actions for $1$ are \textcolor{red}{red}
	and actions for $0$ are \textcolor{blue}{blue} arrows.}
	\label{fig-automata-drawings}
\end{figure}

\begin{prop}
	Every member of $\langle\mathrm{EF},\mathrm{AF}\rangle_M$ is \emph{letter idempotent}:
	satisfies $aax=ax$ for each letter $a$ and state $x$.
\end{prop}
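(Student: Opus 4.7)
The plan is to prove the claim by structural induction on the construction of the Moore pseudovariety $\langle\mathrm{EF},\mathrm{AF}\rangle_M$. That is, I would show that the class $\mathbf{LI}$ of letter-idempotent forest automata (those satisfying $aax = ax$ for every letter $a$ and state $x$) contains both $\mathrm{EF}$ and $\mathrm{AF}$, and is closed under renamings, subautomata, homomorphic images, and Moore products. Since $\langle\mathrm{EF},\mathrm{AF}\rangle_M$ is the least such class, this implies $\langle\mathrm{EF},\mathrm{AF}\rangle_M\subseteq\mathbf{LI}$.

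For the base case, I would simply unfold the actions of Examples~\ref{example-ef-automaton} and~\ref{example-af-automaton}: in $\mathrm{EF}$ the action is $a\cdot x = a\vee x$, so $a\cdot a\cdot x = a\vee a\vee x = a\vee x = a\cdot x$; and for $\mathrm{AF}$, the action by $1$ is constant-to-$1$ while the action by $0$ maps $0,2\mapsto 0$ and $1\mapsto 1$, both of which are easily seen to be idempotent. Closure under renamings is immediate since a renaming only relabels letters using $\delta\cdot'q = h(\delta)\cdot q$, so $\delta\cdot'\delta\cdot'q = h(\delta)\cdot h(\delta)\cdot q = h(\delta)\cdot q = \delta\cdot' q$. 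Closure under subautomata is trivial as the action is the restriction. For homomorphic images one uses surjectivity: if $h:A\twoheadrightarrow A'$ and $x'=h(x)$, then $a\cdot' a\cdot' x' = h(a\cdot a\cdot x) = h(a\cdot x) = a\cdot' x'$.

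The main obstacle is the closure under Moore products, which I expect to be the only step with real content. Let $A_1\times_\alpha A_2$ be a Moore product of two letter-idempotent automata and compute $a\cdot a\cdot(p,q)$ from the definition $a\cdot(p,q) = (a\cdot_1 p,\,\alpha(a,a\cdot_1 p)\cdot_2 q)$. Setting $b = \alpha(a,a\cdot_1 p)$, the first step yields $(a\cdot_1 p,\,b\cdot_2 q)$, and applying $a$ once more produces $(a\cdot_1 a\cdot_1 p,\,\alpha(a,a\cdot_1 a\cdot_1 p)\cdot_2 b\cdot_2 q)$. The crucial observation is that $\alpha$ is evaluated at the state \emph{after} the action of $a$: by letter-idempotence of $A_1$ we have $a\cdot_1 a\cdot_1 p = a\cdot_1 p$, so $\alpha(a,a\cdot_1 a\cdot_1 p) = b$ as well. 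Then letter-idempotence of $A_2$ gives $b\cdot_2 b\cdot_2 q = b\cdot_2 q$, and the whole expression collapses to $(a\cdot_1 p,\,b\cdot_2 q) = a\cdot(p,q)$, as required.

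What makes this work cleanly is precisely the design choice that $\alpha$ reads the post-action state of the first component; had $\alpha$ been defined on the pre-action state, the second and third occurrences of $a$ would potentially use different control letters and the identity would fail. This is the one place where the structure of the Moore product is essential, and it is the step I would take most care with when writing up the full proof.
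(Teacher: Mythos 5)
Your proposal is correct and follows essentially the same route as the paper: verify the identity in $\mathrm{EF}$ and $\mathrm{AF}$ directly, note preservation under renamings, subautomata and quotients, and for the Moore product carry out exactly the computation $a\cdot a\cdot(p,q)=(a\cdot a\cdot p,\alpha(a,a\cdot a\cdot p)\cdot\alpha(a,a\cdot p)\cdot q)=(a\cdot p,\alpha(a,a\cdot p)\cdot q)$ using letter idempotence of the first factor to identify the two control letters and of the second factor to collapse the double action. Your closing remark about $\alpha$ reading the post-action state is precisely the point the paper's calculation relies on, just made explicit.
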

\begin{proof}
	Again, in both $\mathrm{EF}$ and $\mathrm{AF}$ the actions can be verified to be letter idempotent.
	The property is clearly preserved under taking renamings and quotients. For Moore products,
	if $A=(Q_1,\Sigma,+_1,0_1,\cdot_1)$ and $B=(Q_2,\Delta,+_2,0_2,\cdot_2)$ are letter idempotent forest automata
	and $\alpha:\Sigma\times Q_1\to \Delta$ is a control function, then in
	the Moore product $A\times_\alpha B$ we have
	\begin{align*}
	a\cdot a\cdot(p,q) &= a\cdot (a\cdot p,\alpha(a,a\cdot p)\cdot q)\\
	&= (a\cdot a\cdot p,\alpha(a,a\cdot a\cdot p)\cdot \alpha(a,a\cdot p)\cdot q)\\
	&= (a\cdot p,\alpha(a,a\cdot p)\cdot\alpha(a,a\cdot p)\cdot q)\\
	&= (a\cdot p,\alpha(a,a\cdot p)\cdot q)\\
	&= a\cdot(p,q),
	\end{align*}
	proving the claim. (We omitted the subscripts in the actions for better readability.)
\end{proof}
\section{The case of TL[EF]}
In this section we characterize $\langle\mathrm{EF}\rangle_M$.
\begin{prop}
	In each member of $\langle \mathrm{EF}\rangle_M$ we have $ax\leq x$ for each letter $a$ and state $x$.
\end{prop}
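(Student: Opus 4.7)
The plan is to show that the inequality $ax \leq x$, which in semilattice automata is equivalent to the equation $ax + x = ax$, holds in the generator $\mathrm{EF}$ and is preserved under each of the four closure operations defining $\langle \mathrm{EF}\rangle_M$: renamings, subautomata, homomorphic images, and Moore products. Since the condition is expressed as an equation in the operations $+$ and $\cdot$, its preservation under renamings, subautomata and quotients is routine and I would just note this in one sentence, as in the preceding proofs about semilattice horizontal monoids and letter idempotence.

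For the generator, I would first recall that in $\mathrm{EF}$ the horizontal monoid is $(\{0,1\}, \vee, 0)$, so the additive unit $0$ is the top of the semilattice order and $1$ sits below it. Then the four cases of the action can be checked directly: $0 \cdot x = x \leq x$ trivially, and $1 \cdot x = 1$, which is below both $0$ and $1$. The mild subtlety worth flagging is precisely that $0$ is the top, not the bottom, of the order; once this is noted the verification is immediate.

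The only substantial step is closure under the Moore product, which I would handle by direct computation mirroring the Moore-product step of the letter idempotence proof. Assume $A_1 = (Q_1,\Sigma,+_1,0_1,\cdot_1)$ and $A_2 = (Q_2,\Delta,+_2,0_2,\cdot_2)$ both satisfy $ax \leq x$, and let $\alpha$ be a control function. Since the horizontal monoid of $A_1 \times_\alpha A_2$ is the direct product of the two horizontal monoids, the induced semilattice order is componentwise. For a letter $a \in \Sigma$ and state $(p,q)$, writing $b = \alpha(a, a \cdot_1 p)$, we have $a \cdot (p,q) = (a \cdot_1 p,\, b \cdot_2 q)$; applying the hypothesis to $A_1$ at $(a,p)$ and to $A_2$ at $(b,q)$ gives $a \cdot_1 p \leq p$ and $b \cdot_2 q \leq q$, which together yield $a \cdot (p,q) \leq (p,q)$ in the product. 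I do not anticipate any genuine obstacle here; the argument is a straightforward adaptation of the letter idempotence calculation to the semilattice-order equation.
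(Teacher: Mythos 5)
Your proposal is correct and follows essentially the same route as the paper: verify the inequality in $\mathrm{EF}$ (noting that $0$ is the top and $1$ the bottom of the order), dismiss renamings, subautomata and quotients as routine for an equational property, and do the componentwise computation $a\cdot(p,q)=(a\cdot_1 p,\,\alpha(a,a\cdot_1 p)\cdot_2 q)\leq(p,q)$ for the Moore product. The only cosmetic difference is that the paper writes out the quotient case explicitly and phrases the product step via the sum $a\cdot(x,y)+(x,y)$ rather than the order, which is the same calculation.
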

\begin{proof}
	The automaton $\mathrm{EF}$ satisfies this property as the letter $1$ maps both states to the least state $1$,
	and $0$ acts as the identity function. The property is clearly preserved under renamings and subautomata.
	For homomorphic images, if $\Theta$ is a congruence of the automaton $A=(Q,\Sigma,+,0,\cdot)$ satisfying
	the property, then for each class $x/\Theta$ we have $a\bigl(x/\Theta\bigr)+x/\Theta=(ax)/\Theta+x/\Theta=
	(ax+x)/\Theta=(ax)/\Theta=a\bigl(x/\Theta)$, thus the property is satisfied.
	
	Finally, if $A=(Q,\Sigma,+,0,\cdot)$ and $B=(Q',\Delta,+',0',\cdot')$ are forest automata satisfying the property,
	and $\alpha:\Sigma\times Q\to \Delta$ is a control function, then
	\begin{align*}
	a\cdot(x,y)+(x,y) &= (a\cdot x,\alpha(a,a\cdot x)\cdot y)+(x,y)\\
	&= (a\cdot x+x,\alpha(a,a\cdot x)\cdot y+y)\\
	&= (a\cdot x,\alpha(a,a\cdot x)\cdot y)\\
	&= a\cdot (x,y),
	\end{align*}
	thus the property is indeed preserved under taking Moore products.
\end{proof}

So weknow that each member of $\langle\mathrm{EF}\rangle_M$ is a
letter idempotent semilattice automaton
satisfying the inequality $ax\leq x$. It turns out these properties are also sufficient:
\begin{thm}
	A connected forest automaton belongs to $\langle\mathrm{EF}\rangle_M$ if and only if
	it is a letter idempotent semilattice automaton satisfying $ax\leq x$.
\end{thm}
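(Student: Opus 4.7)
The forward direction is established by the preceding propositions. For the converse, let $A=(Q,\Sigma,+,0,\cdot)$ be a connected letter idempotent semilattice automaton with $ax\leq x$. The plan is to realise $A$ as a quotient of a connected subautomaton of an iterated Moore product of copies of $\mathrm{EF}$ (with appropriate renamings), which by closure of $\langle\mathrm{EF}\rangle_M$ will place $A$ in this pseudovariety.

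For each $q\in Q$ define $\chi_q:Q\to\{0,1\}$ by $\chi_q(x)=0$ iff $q\leq x$. Since $+$ is the semilattice meet, $q\leq x+y$ iff $q\leq x$ and $q\leq y$, so $\chi_q(x+y)=\chi_q(x)\vee\chi_q(y)$, and $\chi_q(0_A)=0$. Hence $\phi:=(\chi_q)_{q\in Q}$ is an injective semilattice embedding $Q\hookrightarrow\{0,1\}^Q$ sending $0_A$ to the zero vector. Moreover, since $\sigma\cdot x\leq x$, the value of each coordinate of $\phi(x)$ can only flip from $0$ to $1$ under a letter action, never the reverse; this is exactly the repertoire of actions available in $\mathrm{EF}$, which is what makes packaging $A$'s action into an iterated Moore product of $\mathrm{EF}$'s plausible.

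Enumerate $Q=\{q_1,\ldots,q_n\}$ in an order to be chosen, set $P_0$ to be the trivial one-state $\Sigma$-automaton, and build $P_i=P_{i-1}\times_{\alpha_i}\mathrm{EF}$ inductively so that for every $x\in Q$ the state reached in $P_i$ on any forest $s$ with $s^A=x$ is $\phi_i(x):=(\chi_{q_1}(x),\ldots,\chi_{q_i}(x))$. The semilattice operation and the initial state inherit from the product structure. By the monotonicity above, when $\chi_{q_i}(x)=1$ both possible values of $\alpha_i$ give the correct update on the $i$-th coordinate, so the only genuine constraint is that $\alpha_i\bigl(\sigma,\phi_{i-1}(\sigma\cdot x)\bigr)=\chi_{q_i}(\sigma\cdot x)$ whenever $q_i\leq x$. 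The main obstacle is ensuring this specification is a well-defined function on $\Sigma\times P_{i-1}$: given $q_i\leq x,x'$ with $\phi_{i-1}(\sigma\cdot x)=\phi_{i-1}(\sigma\cdot x')$ we must have $\chi_{q_i}(\sigma\cdot x)=\chi_{q_i}(\sigma\cdot x')$. A plain linear extension of $\leq$ does not suffice, as an image $\sigma\cdot x$ may land strictly below $q_i$ while no earlier coordinate records this; the remedy I envisage is to insert auxiliary $\mathrm{EF}$-factors tracking separating predicates whenever a conflict appears, which terminates because $Q$ is finite and the set of reachable actions is capped by letter idempotence. Once the construction closes, the connected part of the final Moore product (generated by the zero vector, which equals $\phi(0_A)$), together with the surjection back to $Q$ induced by $\phi^{-1}$ on states in the image of $\phi$, yields $A$ as the desired quotient.
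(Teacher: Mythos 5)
Your forward direction is fine, and the embedding $\phi=(\chi_q)_{q\in Q}$ of the horizontal semilattice into $\{0,1\}^Q$ is correct (it is injective, sends $0_A$ to the zero vector, and turns $+$ into coordinatewise $\vee$). But the proof has a genuine gap at exactly the crux of the theorem: the well-definedness of the control functions $\alpha_i$. You correctly identify that you need $\alpha_i\bigl(\sigma,\phi_{i-1}(\sigma\cdot x)\bigr)=\chi_{q_i}(\sigma\cdot x)$ whenever $q_i\leq x$, and that this fails to be a function of $(\sigma,\phi_{i-1}(\sigma\cdot x))$ alone for any naive ordering of $Q$ --- but then you resolve the conflict by fiat, ``insert auxiliary $\mathrm{EF}$-factors tracking separating predicates.'' This is not an argument: any auxiliary factor must itself be a renaming of $\mathrm{EF}$ sitting inside a Moore product, so it can only compute a predicate that is $\vee$-additive over the horizontal sum, can only flip from $0$ to $1$ along the vertical action, and whose update may depend only on the letter and the \emph{already-computed} coordinates of the new state. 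You give no reason why the ``separating predicates'' you need are of this restricted form, and the termination claim (``capped by letter idempotence'') does not address the real obstruction, which is expressibility, not termination. In short, the entire difficulty of the theorem has been moved into an unproven sub-claim.

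For contrast, the paper avoids this head-on construction by inducting on $|Q|$: it collapses an atom $p$ onto the bottom element $q_0$ via a congruence $\Theta_p$; if there are two atoms the automaton is subdirectly reducible and one is done by the direct-product closure; if there is a single atom, $A$ is exhibited as a quotient of $P(A')\times_\alpha\mathrm{Aux}$, where $P(A')$ is the \emph{powerset automaton} recording the set of all states of $A'=A/\Theta_p$ visited in the subtrees (itself a Moore product of $\mathrm{EF}$-renamings), and $\mathrm{Aux}$ is a three-state automaton over a four-letter alphabet built from one more Moore product with $\mathrm{EF}$. The essential point you are missing is that the first factor must carry strictly more information than the current state's partial characteristic vector: the paper feeds the control function the whole set of visited states, and even then only needs to distinguish $p$ from $q_0$, which is why a single extra $\mathrm{EF}$-layer per induction step suffices. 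If you want to salvage your approach, you would need to prove that whatever your auxiliary coordinates compute is both sufficient to separate the conflicting pairs and realizable by $\mathrm{EF}$-factors; as written, neither is established.
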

\begin{proof}
The main idea of the proof is that whenever $A$ is a forest automaton satisfying these properties,
then $A$ belongs to the Moore pseudovariety generated by the proper homomorphic images of $A$
and the automaton $\mathrm{EF}$.

So let $A=(Q,\Sigma,+,0,\cdot)$ be a letter idempotent semilattice automaton satisfying $ax\leq x$ for each $a\in \Sigma$
and $x\in Q$. We apply induction on $|Q|$ to show that $A\in\langle\mathrm{EF}\rangle_M$.
If $|Q|=1$, then we are done since the trivial automaton belongs to any nonempty pseudovariety
so assume $|Q|>1$.

It is clear that any semilattice automaton has a least element $\mathop\sum\limits_{q\in Q}q$. Let $q_0$
denote this state of $A$. By $ax\leq x$ we get that $aq_0=q_0$ for each letter $a\in \Sigma$.

As $Q$ is a finite semilattice having at least two elements, there is at least one atom of $Q$,
that is, an element $x\neq q_0$ such that there is no $y$ with $q_0<y<x$. So let $p$ be an atom
of $Q$. Then we claim that the equivalence relation $\Theta_p$
which merges $\{q_0,p\}$ and leaves the other
states in singleton classes, is a homomorphism of $A$. Indeed, by $ap\leq p$ we get that
$ap\in\{p,q_0\}$ and $aq_0=q_0$, thus the actions are compatible with $\Theta_p$.
Moreover, for any state $q$ we have that $p+q\in\{p,q_0\}$ (as $p$ is an atom, the infimum is either $p$
or $q_0$) and $q_0+q=q_0$, thus the addition is also compatible with $\Theta_p$.
As $A/\Theta_p$ is also a letter idempotent semilattice automaton satisfying $ax\leq x$, and has $|Q|-1$ states,
applying the induction hypothesis we get $A/\Theta_p\in\langle\mathrm{EF}\rangle_M$.

Now we have two cases. Either there are at least two atoms, or there is only one.

If there are at least two atoms in $Q$, say $p$ and $q$, then $\Theta_p\cap\Theta_q$ is the identity
relation, that is, the intersection of two nontrivial congruences is the identity. Then,
$A$ divides the direct product $A/\Theta_p\times A/\Theta_q$ (it is subdirectly reducible),
which are two automata belonging to $\langle \mathrm{EF}\rangle_M$, thus $A$ is also a member of this class.

If there is exacly one atom $p$ of $Q$, then $A'=A/\Theta_p$ belongs to $\langle\mathrm{EF}\rangle_M$ by induction.
Thus, it suffices to show that $A$ belongs to $\langle A',\mathrm{EF}\rangle_M$.

The idea is the following. Let $Q'$ denote $Q/\Theta_p$.
To ease notation, we identify each $\Theta_p$-class with its least element,
i.e. the classes $\{q\}$ with $q\notin\{p,q_0\}$ with the state $q$,
and the class $\{p,q_0\}$ with $q_0$.

For a forest $s$, let $\mathrm{states}(s)$ denote the set
$\{t^{A'}:t\hbox{ is a subtree of }s\}$ of the states visited by $A'$
upon evaluating $s$. That is,
\begin{itemize}
	\item $\mathrm{states}(\mathbf{0})=\emptyset$,
	\item $\mathrm{states}(t_1+\ldots+t_n)=\mathop\bigcup\limits_{i\in[n]}\mathrm{states}(t_i)$,
	\item $\mathrm{states}(a(s))=\{(a\cdot s)^{A'}\}\cup\mathrm{states}(s)$.
\end{itemize}
For any finite set $H$, the $H$-automaton $P(H)=(P(H),H,\cup,\emptyset,\cdot)$ with
$h\cdot H'=\{h\}\cup H'$ belongs to $\langle\mathrm{EF}\rangle_M$:
first, one considers the $H$-renaming $E_h=(\{0,1\},H,\vee,0,\cdot)$ of $\mathrm{EF}$
with $h$ acting as $1$ and all other $h'\neq h$ acting as $0$, then the direct product
$\mathop\prod\limits_{h\in H}E_h$ is isomorphic to the above automaton under the
mapping $(e_h)_{h\in H}\mapsto\{h\in H:e_h=1\}$. Moreover, as $A'$ is a semilattice
automaton satisfying $ax\leq x$, we have that $s^{A'}$ is always the sum of the members
of $\mathrm{states}(s)$. Hence, the  Moore product $A'\times_\alpha P(Q')$ with
$\alpha(a,p)=p$ for each $a\in\Sigma$, $p\in Q'$ is isomorphic to
the automaton $P(A')=(P(Q'),\Sigma,\cup,\emptyset,\cdot)$ with
$a\cdot H=H\cup\{a\cdot\mathop\sum\limits_{q\in H}q\}$ for each $a\in \Sigma$ and $H\subseteq Q'$.

We now define an auxiliary automaton $\mathrm{Aux}=(\{0,1,2\},\Delta,\max,0,\cdot)$ over the
$4$-letter alphabet $\Delta7\{\ell,o,e,s\}$ so that $A$ will be a quotient of $P(A')\times_\alpha\mathrm{Aux}$ for some suitable control function $\alpha$.

Summarizing the requirements, $\mathrm{Aux}$ satisfies \ldots
\begin{itemize}
	\item $\ell\cdot 0=0$, indicating that we are not yet in
	the set $\{q_0,p\}$; for being well-defined, let $\ell\cdot 1=1$ and $\ell\cdot 2=2$;
	\item $o\cdot 0 = o\cdot 1= o \cdot 2 = 2$, indicating that we are
	already in $q_0$;
	\item $s\cdot 0=s\cdot 1=1$ and $s\cdot 2=2$, indicating that
	if we were not yet in $q_0$, then we have reached $p$ now (or sooner),
	otherwise we remain in $q_0$;
	\item $e\cdot 1=1$ and $e\cdot 0=e\cdot 2=2$, indicating that
	if we were so far in $p$, then we are still in $p$, otherwise we are in $q_0$ now.
\end{itemize}
We claim that if $\mathrm{Aux}$ is such an automaton, then for the Moore product
$\mathrm{P}=P(A')\times_\alpha\mathrm{Aux}$ with $\alpha$ given as
\begin{align*}
\alpha(a,H) &=\begin{cases}
\ell &\hbox{if }\sum H\neq q_0;\\
o &\hbox{if }\sum H=q_0\hbox{ and }a\cdot p=q_0;\\
s &\hbox{if }\sum H=q_0,~a\cdot p=p\hbox{ and }a\cdot(\sum (H-\{q_0\}))=p;\\
e &\hbox{if }\sum H=q_0,~a\cdot p=p\hbox{ and }a\cdot(\sum (H-\{q_0\}))=q_0\\
\end{cases}
\end{align*}
it holds that for any tree $t$, $t^P=(\mathrm{states}(t),x)$ with $x=0$
if $t^A\notin\{p,q_0\}$, $x=1$ if $t^A=p$ and $x=2$ if $t^A=q_0$.

As the first factor of $P$ is $P(A')$, the first entry being $\mathrm{states}(t)$
is clear. Now we proceed by induction. Let us write $t=a(s)$
with $s=t_1+\ldots +t_n$
and assume the claim holds for $t_1,\ldots,t_n$. Now if $t^A\notin\{p,q_0\}$,
then $t_i^A\notin\{p,q_0\}$ either, thus $t_i^P=(\mathrm{states}(t_i),0)$ for
each $i\in[n]$. Also, $s^A\notin\{p,q_0\}$ as well (due to $ax\leq x$),
hence $\alpha(a,H)=\ell$ for $H=\mathop\bigcup\limits_{i\in[n]}\mathrm{states}(t_i)$,
yielding $t^P=(\mathrm{states}(t),0)$ as well.

Now assume $t^A=p$. There are two cases: either $t_i^A=p$ for some $i\in[n]$,
or $p<t_i^A$ for each $i\in[n]$.
\begin{itemize}
	\item If $t_i^A=p$ for some $i\in[n]$, then by induction, $t_i^P=(H_i,1)$ and $t_j^P=(H_j,x)$
		for each $j\in[n]$ with $x\in\{0,1\}$. Thus $s^P=(H,1)$ for $H=\mathrm{states}(s)$.
		Also, in this case $s^A=p$ as well, thus $a\cdot p=p$ by letter idempotence. Hence $\alpha(a,H)$
		is either $s$ or $e$, but in both cases, $t^P=(\mathrm{states}(t),1)$.
	\item If $p<t_i^A$ for each $i\in [n]$, then $s^P=(H,0)$ for $H=\mathrm{states}(s)$.
		As $t^A=p$, we have $a\cdot p=p$. Thus, $t^P=(\{q_0\}\cup H,s\cdot 0)=
		(\mathrm{states}(t),1)$ in this case.
\end{itemize}

Finally, assume $t^A=q_0$. There are three cases: either $t_i^A=q_0$ for some
$i\in[n]$, or $q_0<t_i^A$ for each $i\in[n]$ but $t_i^A=p$ for some $i\in[n]$,
or $p<t_i^A$ for each $i\in[n]$.
\begin{itemize}
	\item If $t_i^A=q_0$ for some $i\in[n]$, then $t_i^P=(\mathrm{states}(t_i),2)$ by
	induction, thus $s^P=(H,2)$ and since each one of $o$, $s$ and $e$ map $2$ to $2$,
	we get that $t^P=(\mathrm{states}(t),2)$ (as $\mathrm{states}(t)$ contains $q_0$).
	\item If $t_i^A=p$ for some $i\in[n]$ and $p\leq t_j^A$ for each $j\in[n]$,
	then $s^A=p$ as well since $p$ is the only atom of $A$. Then $a\cdot p=q_0$,
	thus $\alpha(a,\mathrm{states}(t))=o$ and hence $t^P=(\mathrm{states}(t),2)$.
	\item Finally, if $p<t^A_i$ for each $i\in[n]$, then by the induction hypothesis
	$t_i^P=(\mathrm{states}(t_i),0)$ and $s^P=(\mathrm{states}(s),0)$.
	Moreover, for $t^P=(H,x)$ we have that $H-\{q_0\}=\mathrm{states}(s)$,
	and $\sum(H-\{q_0\})=s^A$. Hence $a\cdot(\sum(H-\{q_0\}))=q_0$,
	and $\alpha(a,\mathrm{states}(t))=e$. By $e(0)=2$ we get $t^P=(\mathrm{states}(t),2)$.
\end{itemize}
It remains to show that there exists such an automaton $\mathrm{Aux}$ in
$\langle \mathrm{EF}\rangle_M$. Let 
$B=(\{0,1\},\Delta,\vee,0,\cdot)$ be the $\Delta$-renaming of $\mathrm{EF}$ with $\ell^B=e^B=0^{\mathrm{EF}}$ and $s^B=o^B=1^{\mathrm{EF}}$.
Furthermore, let $\alpha:\Delta\times\{0,1\}\to \{0,1\}$ be the mapping
\[\alpha(a,\delta)=\begin{cases}1&\hbox{if }\delta=o\textrm{ or both }\delta=e\textrm{ and }a=0;\\0&\hbox{otherwise.}\end{cases}\]

We claim that $\mathrm{Aux}$ is the homomorphic image of $B\times_\alpha\mathrm{EF}$ under the mapping $(0,0)\mapsto 0$,
$(1,0)\mapsto 1$ and $(0,1),(1,1)\mapsto 2$. It is clear that this mapping is a homomorphism between the horizontal monoids.
For the actions, consulting the following table
\begin{center}
	\begin{tabular}{|l|c|c|c|c|}
		\hline
		&$(0,0)$&$(0,1)$&$(1,0)$&$(1,1)$\\\hline
		$\ell$&$(0,0)$&$(0,1)$&$(1,0)$&$(1,1)$\\\hline
		$s$&$(1,0)$&$(1,1)$&$(1,0)$&$(1,1)$\\\hline
		$e$&$(0,1)$&$(0,1)$&$(1,0)$&$(1,1)$\\\hline
		$o$&$(1,1)$&$(1,1)$&$(1,1)$&$(1,1)$\\\hline
	\end{tabular}
\end{center}
we get that $\mathrm{Aux}$ is indeed a homomorphic image of $B\times_\alpha\mathrm{EF}$, thus $\mathrm{Aux}$ is in $\langle\mathrm{EF}\rangle_M$, hence so is $A$.
\end{proof}

\section{The case of TL[AF]}
\label{sec-af}

Let us call a forest automaton $A=(Q,\Sigma,+,0,\cdot)$ \emph{positive} if
for any forest $s$, $s^A=0$ if and only if $s=\mathbf{0}$. Then, $\mathrm{AF}$
is a positive automaton. Any connected positive automaton $A=(Q,\Sigma,+,0,\cdot)$
can be written as $A=(Q'\cup\{0\},\Sigma,+,0,\cdot)$ where $(Q',+)$ is a semigroup
(that is, closed under $+$) and the actions also map $Q'$ into itself.
Let us call this set $Q'$ the \emph{core} of $A$, denoted $\mathrm{core}(A)$.

The following is easy to see.
\begin{prop}
	Any renaming and Moore product of a positive forest automaton is positive.
	Also, if $A$ and $B$ are positive, then
	$\mathrm{core}(A\times_\alpha B)\subseteq\mathrm{core}(A)\times\mathrm{core}(B)$
	for any Moore product $A\times_\alpha B$.
\end{prop}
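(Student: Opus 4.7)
The plan is to verify the three assertions separately: renaming preserves positivity, the Moore product of positive automata is positive, and the core of such a Moore product is contained in the product of the cores.

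For the renaming claim, let $A' = (Q, \Delta, +, 0, \cdot')$ be the $\Delta$-renaming of the positive $\Sigma$-automaton $A$ via $h : \Delta \to \Sigma$. Extending $h$ to the literal homomorphism $h : F_\Delta \to F_\Sigma$, a routine structural induction shows $s^{A'} = h(s)^A$ for every $s \in F_\Delta$. Since literal homomorphisms preserve shape, $h(s) = \mathbf{0}$ iff $s = \mathbf{0}$, and combined with positivity of $A$ this yields positivity of $A'$.

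For positivity of a Moore product $A \times_\alpha B$ with $A, B$ positive, I would use the first projection $\pi_1 : Q_1 \times Q_2 \to Q_1$, which is a forest-automaton homomorphism from $A \times_\alpha B$ to $A$: it respects the direct-product monoid structure, and the first coordinate of $a \cdot (p, q)$ is simply $a \cdot_1 p$, independent of $q$. Hence $\pi_1(s^{A \times_\alpha B}) = s^A$ for every $s \in F_\Sigma$, so if $s \neq \mathbf{0}$ then $s^A \neq 0_1$ by positivity of $A$, and therefore $s^{A \times_\alpha B} \neq (0_1, 0_2)$. As a byproduct this already yields half of the core inclusion: the first projection of any core element of $A \times_\alpha B$ lies in $\mathrm{core}(A)$.

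The remaining half---that the second coordinate of a core element lies in $\mathrm{core}(B)$---is the main obstacle, because $\pi_2$ is \emph{not} a forest-automaton homomorphism: the update rule for the second coordinate depends via $\alpha$ on the current state of the first factor. To handle this I would associate to each $s \in F_\Sigma$ a relabelled $\Delta$-forest $\tilde{s}$ mirroring exactly the sequence of letters fed to $B$ during the evaluation of $s$, namely $\tilde{\mathbf{0}} = \mathbf{0}$, $\widetilde{t_1 + \cdots + t_n} = \tilde{t_1} + \cdots + \tilde{t_n}$, and $\widetilde{a(s')} = b(\tilde{s'})$ with $b = \alpha(a, a \cdot_1 (s')^A)$. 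A short structural induction (using $p' = (s')^A$ from the first-projection identity above) gives $\pi_2(s^{A \times_\alpha B}) = \tilde{s}^B$. Since $s \neq \mathbf{0}$ implies $\tilde{s} \neq \mathbf{0}$, positivity of $B$ then forces $\tilde{s}^B \neq 0_2$, completing the inclusion.
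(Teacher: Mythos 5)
Your proof is correct. Note that the paper itself offers no argument here — the proposition is introduced with ``The following is easy to see'' and left unproved — so there is nothing to compare against line by line; your write-up is the natural filling-in of that gap. All three parts check out: the identity $s^{A'}=h(s)^A$ for renamings, the first-projection homomorphism $\pi_1$ giving positivity of the product from positivity of $A$ alone, and, for the only genuinely non-immediate point, the relabelled forest $\tilde{s}$ with $\widetilde{a(s')}=b(\tilde{s'})$, $b=\alpha(a,a\cdot_1 (s')^A)$, which correctly records the letters fed to the second factor and yields $\pi_2(s^{A\times_\alpha B})=\tilde{s}^B$; this is the same device that underlies the paper's characteristic-forest/Moore-product correspondence, so it fits the intended reading of ``easy to see.'' One cosmetic remark: the paper defines $\mathrm{core}$ only for connected positive automata, so the core inclusion should be read on the connected part of $A\times_\alpha B$; your argument already covers this, since every state there is $s^{A\times_\alpha B}$ for some forest $s$, and $s\neq\mathbf{0}$ forces both coordinates away from zero.
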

The next property is a bit more involved to check:
\begin{prop}
	The connected part $A$ of each nontrivial member of $\langle\mathrm{AF}\rangle_M$ is a positive automaton
	satisfying $p\leq ap$ for each $p\in\mathrm{core}(A)$.
\end{prop}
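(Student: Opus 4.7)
My plan is to prove this by induction on the construction of members of $\langle\mathrm{AF}\rangle_M$ from $\mathrm{AF}$ via renamings, subautomata, homomorphic images, and Moore products. The base case for $\mathrm{AF}$ itself is a direct check: only $\mathbf{0}$ evaluates to the unit state $2$, confirming positivity, and on the core $\{0,1\}$ the inequality $p\leq ap$ follows from inspecting the four pairs, using the chain order $0\leq 1\leq 2$. Renamings and subautomata are immediate, since renamings only relabel input letters and a subautomaton shares its connected part with the parent automaton (both are the least subautomaton containing the unit).

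For Moore products $A\times_\alpha B$, I would invoke the preceding proposition to obtain positivity and the containment $\mathrm{core}(A\times_\alpha B)\subseteq\mathrm{core}(A)\times\mathrm{core}(B)$. The inequality then reduces to a componentwise calculation: for $p=(p_1,p_2)$ in the core of the product and $b=\alpha(a,a\cdot p_1)$,
\[
p+a\cdot p=(p_1+a\cdot p_1,\; p_2+b\cdot p_2)=(p_1,p_2)=p
\]
by the induction hypothesis applied to the connected parts of the factors, so $p\leq ap$ holds throughout $\mathrm{core}(A\times_\alpha B)$.

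The main obstacle will be the homomorphic image case. Given a connected positive $A$ satisfying the inequality and a surjection $h\colon A\twoheadrightarrow A'$ with $A'$ nontrivial, I plan to show that the kernel class $C=h^{-1}(0_{A'})$ must coincide with $\{0_A\}$, which immediately gives positivity of $A'$; the inequality on $\mathrm{core}(A')$ then transfers by applying $h$ to the identity $p=p+ap$ for any preimage $p\in\mathrm{core}(A)$ of a given $p'\in\mathrm{core}(A')$. The heart of the argument is that $C$ would have to be a subautomaton of $A$ if it contained any core element. Indeed, $C$ contains $0_A$ and is closed under $+$ as a congruence class, and it is closed under each action: for $x\in C\cap\mathrm{core}(A)$ the inequality gives $h(ax)\geq h(x)=0_{A'}$, forcing $h(ax)=0_{A'}$ since $0_{A'}$ is the top of the semilattice; and for $x=0_A$, the congruence $\ker h$ identifies $a\cdot 0_A$ with $a\cdot p$ for the assumed $p\in C\cap\mathrm{core}(A)$, and $a\cdot p\in C$ by the previous case. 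Since $A$ is connected it has no proper subautomaton, so this would force $C=A$, making $h$ constant and contradicting nontriviality of $A'$. Hence $C$ contains no core element and equals $\{0_A\}$, completing the argument.
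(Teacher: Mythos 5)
Your proposal is correct and follows essentially the same route as the paper's proof: a direct check for $\mathrm{AF}$, triviality of renamings and subautomata, the identical componentwise computation $(p,q)+a\cdot(p,q)=(p,q)$ for Moore products, and, for quotients, the same key observation that applying the inequality $x\leq ax$ to a core element congruent to $0$ forces the class of $0$ to absorb under the actions. The only (cosmetic) difference is in how the final contradiction for homomorphic images is phrased -- you show the kernel class of $0$ is a subautomaton and invoke connectedness, whereas the paper shows directly that every forest then evaluates to $0/\Theta$; these are two views of the same fact, and the paper additionally notes the small edge case of a Moore product with a trivial factor, which your argument glosses over but which is immediate.
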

\begin{proof}
	The property holds for $\mathrm{AF}$ and is clearly preserved under renamings and subautomata.
	For Moore products of the form $C=A\times_\alpha B$ with $A,B\in\langle\mathrm{AF}\rangle_M$,
	if either $A$ or $B$ is trivial, then $C$ is a renaming of the other one. So assume $A$ and
	$B$ are both nontrivial. By induction we have that $A$ and $B$ are positive
	(thus so is $C$) and the states $p$ in their cores satisfy $p\leq ap$.
	
	Then for any letter $a\in\Sigma$ we have
	\begin{align*}
	(p,q)+a\cdot(p,q) &= (p,q)+(a\cdot p,\alpha(a,a\cdot p)\cdot q)\\
	&= (p+a\cdot p,q+\alpha(a,a\cdot p)\cdot q)\\
	&= (p, q)
	\end{align*}
	and the claim holds.
	
	Finally, let $A\in\langle\mathrm{AF}\rangle_M$ and $\Theta$ a congruence of $A=(Q,\Sigma,+,0,\cdot)$
	such that $A/\Theta$ is nontrivial. Then so is $A$, hence by induction $A$ is positive and
	each state $p$ in its core satisfy $p\leq ap$ for each $a\in\Sigma$. First we show that 
	$A/\Theta$ is positive, that is, $\{0\}$ is a singleton $\Theta$-class. Assume to the contrary
	that $q\Theta 0$ for some state $q\neq 0$. Then $q$ belongs to the core of $A$.
	Also, then for each state $p$ with $q\leq p$ we have $q=(q+p)\Theta(0+p)=p$, hence $p\Theta q$ as well.
	Applying $q\leq aq$ we get that $q/\Theta=aq/\Theta$ for each $a\in \Sigma$. Thus,
	$a\cdot 0/\Theta=a\cdot q/\Theta=q/\Theta=0/\Theta$, and by idempotence we get that every forest $s$
	evaluates to $0/\Theta$ in $A/\Theta$. Hence, $A/\Theta$ is trivial, a contradiction.
	
	Thus, if $p/\Theta$ is in the core of $A/\Theta$, then $p$ is in the core of $A$, thus
	$p\leq ap$, implying $p/\Theta\leq ap/\Theta$.
\end{proof}

We also know where the actions should map the starting state.
\begin{prop}
	For each connected $A=(Q,\Sigma,+,0,\cdot)\in\langle\mathrm{AF}\rangle_M$ and $a\in \Sigma$ it holds that
	$a\cdot 0=a\cdot\bot_A$ where $\bot_A=\sum Q$ is the least state of $A$.
\end{prop}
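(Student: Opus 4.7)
The plan is to imitate the pattern of the preceding propositions in Section~\ref{sec-af}, proving the equation $a\cdot 0=a\cdot\bot_A$ by verifying it for $\mathrm{AF}$ and then showing preservation under each of the four operations generating $\langle\mathrm{AF}\rangle_M$. The base case is a direct calculation: in $\mathrm{AF}$ the horizontal monoid's identity is $2$, $\bot_{\mathrm{AF}}=\min\{0,1,2\}=0$, and $0\cdot 2=0=0\cdot 0$, $1\cdot 2=1=1\cdot 0$.

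For homomorphic images the sum commutes with the quotient map, so $\bot_{A/\Theta}=\bot_A/\Theta$ and the equation descends: $a\cdot(0/\Theta)=(a\cdot 0)/\Theta=(a\cdot\bot_A)/\Theta=a\cdot\bot_{A/\Theta}$. Subautomata do not introduce any new connected automaton, since the connected part of $A'\subseteq A$ coincides with the connected part of $A$ (both are the smallest subautomaton of $A$ containing $0$). For a renaming $A'$ of $A$ via $h\colon\Delta\to\Sigma$ the state set and horizontal monoid are unchanged, so $\bot_{A'}=\bot_A$ and the equation $\delta\cdot 0=\delta\cdot\bot_{A'}$ to verify reduces to $h(\delta)\cdot 0=h(\delta)\cdot\bot_A$, which is the induction hypothesis applied to $A$ at the letter $h(\delta)$.

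The Moore product $C=A\times_\alpha B$ is where the real work lies. In the full direct product the semilattice bottom is $(\bot_A,\bot_B)$, and expanding
\begin{align*}
a\cdot(0_A,0_B) &= (a\cdot_1 0_A,\;\alpha(a,a\cdot_1 0_A)\cdot_2 0_B),\\
a\cdot(\bot_A,\bot_B) &= (a\cdot_1 \bot_A,\;\alpha(a,a\cdot_1 \bot_A)\cdot_2 \bot_B),
\end{align*}
the hypothesis on $A$ makes the first coordinates and the control letter produced by $\alpha$ agree, after which the hypothesis on $B$ makes the second coordinates agree. The main obstacle is precisely the passage from the full product to its connected part: the second projection of the connected part of $C$ can be a strict subset of $Q_B$, so its local semilattice bottom may sit strictly above $\bot_B$. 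To handle this I would restrict $B$ to the sub-alphabet $\Delta''\subseteq\Delta$ of letters actually produced by $\alpha$ on reachable states of $A$; the resulting renaming $B''$ still belongs to $\langle\mathrm{AF}\rangle_M$, its connected part realises this second projection, and the induction hypothesis applied to the connected automaton $(B'')^c$ then supplies the second-coordinate equation with the correct local bottom.
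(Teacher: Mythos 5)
Your overall plan (verify the identity for $\mathrm{AF}$ and show it is preserved by renamings, subautomata, quotients and Moore products) is the natural one, and your base case, quotient case and subautomaton case are fine. The Moore product case, however, has a genuine gap, exactly at the point you flag, and your proposed fix does not close it. The second coordinates occurring in the connected part of $C=A\times_\alpha B$ form the set $\{\hat{s}^{B}: s\in F_\Sigma\}$, where $\hat{s}$ is the relabelling of $s$ in which a node labelled $a$ with child forest $s'$ receives the letter $\alpha(a,a\cdot (s')^{A})$. Restricting $B$ to the sub-alphabet $\Delta''$ of letters in the image of $\alpha$ does not make the connected part of the renamed automaton coincide with this set: the letter a node receives is correlated with the subtree beneath it, so not every $\Delta''$-forest is realized as some $\hat{s}$, and the least reachable second coordinate $r$ can sit strictly above the bottom of the connected part of $B$ restricted to $\Delta''$. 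Worse, the componentwise identity $\delta\cdot 0_B=\delta\cdot r$ that you want to extract from the induction hypothesis can be outright false: taking $B=\mathrm{AF}$ with reachable second coordinates $\{1,2\}$ one has $r=1$ and $0\cdot 2=0\neq 1=0\cdot 1$. The product equation nevertheless holds there, but only because in that situation $\alpha$ can never emit the letter $0$ at the relevant states (otherwise a state with second coordinate $0$ would be reachable); no argument that treats the two coordinates separately can see this. (Your renaming case has a milder instance of the same issue: the connected part of a renaming along a non-surjective $h$ can be a proper subautomaton of $A^c$, so its bottom need not be $\bot_{A^c}$; this one is repairable by strengthening the induction hypothesis to all sub-alphabets.)

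The paper avoids the problem by first invoking a normal form: every connected member of $\langle\mathrm{AF}\rangle_M$ is a homomorphic image of an iterated Moore product $(\ldots(\mathrm{AF}'\times_{\alpha_1}\mathrm{AF})\ldots)\times_{\alpha_n}\mathrm{AF}$ whose right-hand factor at each step is the concrete three-state generator, with connected parts taken after each product. For $B=\mathrm{AF}$ the bottom of the connected part of $A\times_\alpha\mathrm{AF}$ is either $(\bot_A,0)$ or $(\bot_A,1)$; in the first case the computation goes through using the concrete identity $x\cdot 2=x\cdot 0$ valid in $\mathrm{AF}$ for both letters $x$, and in the second case the product is isomorphic to $A$ (the second coordinate is determined by the first via positivity), so there is nothing left to prove. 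To repair your argument you would need either this normal form together with the same two-case analysis, or a substitute lemma establishing $\delta\cdot 0_B=\delta\cdot r$ only for the letters $\delta$ actually emitted at the states where they are applied --- which is essentially the paper's case distinction in disguise.
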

\begin{proof}
	The claim holds for $\mathrm{AF}$ and for any connected part of any renaming of $\mathrm{AF}$.
	Now we use the fact that every connected member of $\langle\mathrm{AF}\rangle_M$ is a homomorphic image
	of a product of the form $(\ldots((\mathrm{AF}'\times_{\alpha_1} \mathrm{AF})\times_{\alpha_2}\mathrm{AF})\times_{\alpha_3}\ldots)\times_{\alpha_n}\mathrm{AF}$ for some
	Moore product with $\mathrm{AF}'$ being a renaming of $\mathrm{AF}$ where after each $\alpha_i$-product we take
	immediately the connected part of the result.
	
	So if $A=(Q,\Sigma,+,0,\cdot)$ satisfies the property and $B$ is the connected part of 
	some Moore product $A\times_\alpha\mathrm{AF}$, then $\bot_B$
	is either $(\bot_A,0)$ (if there is some state in the connected part of the product of the form
	$(p,0)$ or $(\bot_A,1)$ (otherwise). If it is $(\bot_A,1)$, then $B$ is isomorphic to $A$ and
	the claim holds. If it's $(\bot_A,0)$, then for any $a\in\Sigma$ we have
	\begin{align*}
	a\cdot(0,2) &= (a\cdot 0,\alpha(a,a\cdot 0)\cdot 2)\\
	&= (a\cdot \bot_A,\alpha(a,a\cdot\bot_A)\cdot 2)\\
	&= (a\cdot \bot_A,\alpha(a,a\cdot\bot_A)\cdot 0)\\
	&= a\cdot(\bot_A,0)
	\end{align*}
	and thus the claim holds for Moore products.
	
	For homomorphic images, $a\cdot 0/\Theta=(a\cdot 0)/\Theta=(a\cdot\bot_A)/\Theta=a\cdot \bot_A/\Theta$	
	and of course $\bot_A/\Theta$ is the least state of $A/\Theta$, so the claim holds.
\end{proof}

The states of the core also satisfy an additional implication:
\begin{prop}
	For any nontrivial member $A$ of $\langle\mathrm{AF}\rangle_M$,
	it holds that if $p$ and $q$ are in the core of $A$
	and $a\in\Sigma$ is a letter with $p\leq q\leq ap$, then $ap=aq$.
\end{prop}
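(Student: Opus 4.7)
The plan is to follow the same template as the preceding propositions: check that the property holds for $\mathrm{AF}$, and then verify that it is preserved under each of the four operations generating the Moore pseudovariety (renamings, subautomata, homomorphic images, and Moore products), proceeding by induction on the construction.

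For $\mathrm{AF}$, its core is $\{0,1\}$. For the letter $1$, both states map to $1$, so $ap = 1 = aq$ trivially; for the letter $0$, the action is the identity on the core, so the chain $p\leq q\leq 0\cdot p = p$ collapses and forces $p = q$. Renamings and subautomata do not affect the monoid, actions, or core membership, so they transparently preserve the property.

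For a Moore product $C = A\times_\alpha B$, the ordering on $C$ is componentwise and the action is $a\cdot(p,q) = (ap,\alpha(a,ap)\cdot q)$. The earlier positivity proposition embeds the core of $C$ into $\mathrm{core}(A)\times\mathrm{core}(B)$, so a chain $(p_1,q_1)\leq(p_2,q_2)\leq a(p_1,q_1)$ in the core projects to chains of the required form in each coordinate. The induction hypothesis applied to $A$ yields $ap_1 = ap_2$, which in turn forces the control-function values $\alpha(a,ap_1)$ and $\alpha(a,ap_2)$ to coincide; applying the induction hypothesis to $B$ with this common letter closes the second coordinate and gives $a(p_1,q_1)=a(p_2,q_2)$.

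The main obstacle is the homomorphic-image case: a chain $\overline{p}\leq\overline{q}\leq a\overline{p}$ in the quotient $A/\Theta$ does not automatically lift to a chain in $A$. The remedy is a two-step adjustment of representatives. Start from any $p_0\in\overline{p}$ and $q_0\in\overline{q}$; replace $p_0$ by $p_0+q_0$ (staying in $\overline{p}$ because $\overline{p}+\overline{q}=\overline{p}$), which forces $p_0\leq q_0$ in $A$. Then replace $q_0$ by $q_0+ap_0$ (staying in $\overline{q}$ because $\overline{q}+a\overline{p}=\overline{q}$), which forces $q_0\leq ap_0$. The earlier relation $p_0\leq q_0$ survives this second replacement precisely because $p_0\leq ap_0$ holds in $\mathrm{core}(A)$ by the previous proposition, so adding $ap_0$ on top of $q_0$ does not disturb the fact that $p_0$ is below both summands. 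Both $p_0$ and $q_0$ remain in $\mathrm{core}(A)$ since their classes are nonzero. The induction hypothesis on $A$ then gives $ap_0=aq_0$, which projects to $a\overline{p}=a\overline{q}$ in the quotient.
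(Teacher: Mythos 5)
Your proof is correct, and for the base case, renamings, subautomata and Moore products it coincides with the paper's argument (the Moore-product step, propagating $ap_1=ap_2$ into equality of the control letters, is exactly the paper's computation). Where you genuinely diverge is the homomorphic-image case, which is indeed the crux. The paper does not lift the chain in two steps; it defines two descending sequences $q_0=q$, $p_0=p+q$, $q_n=q_{n-1}+ap_{n-1}$, $p_n=p_{n-1}+q_n$, observes that every $p_n$ stays $\Theta$-equivalent to $p$ and every $q_n$ to $q$, with $p_n\le q_n$ and $q_{n+1}\le ap_n$ holding by construction, and then invokes finiteness of the $\Theta$-classes to reach a fixed point $p_m=p_{m+1}$, $q_m=q_{m+1}$, at which the chain $p_m\le q_m\le ap_m$ finally closes in $A$ itself. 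Your observation that a single round of adjustment already suffices --- because $p_0=p+q$ lies in the core, so the previously established inequality $x\le ax$ yields $p_0\le ap_0$ and hence $p_0\le q_0+ap_0$, the infimum of two elements above $p_0$ --- is a genuine shortcut: it replaces the fixed-point iteration by one application of the preceding proposition. The trade-off is that the paper's iteration is self-contained in the semilattice structure plus finiteness, whereas yours imports $x\le ax$ on the core; that import is harmless here, since the inequality is proved immediately beforehand and applies ($A$ is nontrivial because its nontrivial quotient is, $A/\Theta$ is positive so representatives of core classes lie in $\mathrm{core}(A)$, and the core is closed under $+$ and the actions, so $p+q$ and $q+ap_0$ are legitimate core representatives of the classes of $p$ and $q$). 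Both arguments are valid; yours is shorter.
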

\begin{proof}
	It is straightforward to verify that the properties hold for $\mathrm{AF}$ and is clear that
	are preserved under renamings and subautomata. For Moore products $C=A\times_\alpha B$,
	if $(p,p')$ and $(q,q')$ are in the core of $C$, then $p,q$ are in the core of $A$ and
	$p',q'$ are in the core of $B$. Now assuming $(p,p')\leq (q,q')\leq a(p,p')$
	we get $p\leq q\leq ap$, implying $ap=aq$, and $p'\leq q'\leq \alpha(a,ap)p'$,
	implying $\alpha(a,ap)p'=\alpha(a,ap)q'=\alpha(a,aq)q'$.
	Hence we have $a(q,q')=(aq,\alpha(a,aq)q')=(ap,\alpha(a,ap)p')=a(p,p')$.
	
	Finally, for homomorphic images, let $A$ satisfy this property and let $\Theta$ be a congruence of $A$.
	Let $p/\Theta\leq q/\Theta\leq ap/\Theta$. We define two sequences $p_0,p_1,\ldots$ and $q_0,q_1,\ldots$
	as follows:
	\begin{itemize}
		\item Let $q_0=q$ and $p_0=p+q$.
		\item For each $n>0$, let $q_n=q_{n-1}+ap_{n-1}$ and $p_n=p_{n-1}+q_n$.
	\end{itemize}
	Then for each $n\geq 0$ we have that $p_n\Theta p$, $q_n\Theta q$,
	$p_n\leq q_n$ and $q_{n+1}\leq ap_n$. Moreover, $p_{n+1}\leq p_n$ and
	$q_{n+1}\leq q_n$. Since $A$ is finite, so is each $\Theta$-class, thus
	for some $m$ we have $p_m=p_{m+1}$ and $q_m=q_{m+1}$, yielding
	$p_m\leq q_m\leq ap_m$, hence $aq_m=ap_m$, thus $a(q/\Theta)=a(p/\Theta)$
	and the property is thus verified.
\end{proof}
We actually conjecture that the properties we enlisted so far are also
sufficient for membership in $\langle\mathrm{AF}\rangle_M$.
\begin{conj}
\label{conj-a}
	A nontrivial connected forest automaton belongs to $\langle\mathrm{AF}\rangle_M$
	if and only if it is a positive, letter idempotent semilattice automaton,
	with its states in its core satisfying $x\leq ax$
	and the implication $x\leq y\leq ax\Rightarrow ay=ax$.
\end{conj}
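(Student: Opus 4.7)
The only-if direction is fully handled by the preceding propositions. For the if-direction I would follow the strategic outline of the TL[EF] theorem, proceeding by induction on $|Q|$. The base case is $|\mathrm{core}(A)|=1$: write $\mathrm{core}(A)=\{c\}$. Positivity forces $a\cdot 0\neq 0$ for every letter, so $a\cdot 0=c$, and the condition $c\leq ac$ with $ac\in\mathrm{core}(A)$ forces $ac=c$. Thus $A$ is isomorphic to the connected part of the renaming of $\mathrm{AF}$ in which every letter acts as the letter $1$, whence $A\in\langle\mathrm{AF}\rangle_M$.

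For the inductive step, first dispatch the subdirectly reducible case: if $A$ admits two distinct minimal nontrivial congruences $\Theta_1,\Theta_2$ with $\Theta_1\cap\Theta_2$ equal to the diagonal, then $A$ embeds as a subdirect product of $A/\Theta_1$ and $A/\Theta_2$. Both quotients inherit all five listed properties (the chain-refinement trick from the proof of the last preceding proposition already shows the implication $x\leq y\leq ax\Rightarrow ay=ax$ is preserved under homomorphic images), so the induction hypothesis places both in $\langle\mathrm{AF}\rangle_M$, and hence so is $A$. Otherwise $A$ has a unique minimal nontrivial congruence $\Theta$, and the task becomes to realise $A$ as a homomorphic image of the connected part of a Moore product $(A/\Theta)\times_\alpha\mathrm{Aux}$, where $\mathrm{Aux}$ is built from $\mathrm{AF}$ by renamings, direct products and quotients. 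A natural candidate for $\Theta$ is to pick an atom $p$ of the semilattice $\mathrm{core}(A)$ (an element covering $\bot_A$) and collapse $p$ with $\bot_A$; one verifies that this is a congruence exactly as in the EF proof and that the quotient still satisfies all five properties.

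The main obstacle is the design of $\mathrm{Aux}$ and of the control function $\alpha$ in this single-congruence case. In the EF proof, a three-state automaton over a four-letter alphabet sufficed to record whether we had descended into $\{q_0,p\}$ and, if so, how the last step was taken. The analogue here must record, for a subtree $t$, whether $t^A$ equals $\bot_A$, equals $p$, or strictly dominates $p$ -- information that is \emph{not} directly visible in $A/\Theta$. The implication $x\leq y\leq ax\Rightarrow ay=ax$ is precisely the algebraic statement that should make this finite bookkeeping possible: whenever the $A/\Theta$-value of $s$ sits between the $\Theta$-class of $p$ and its $a$-image, the true $A$-value of $a(s)$ necessarily coincides with $ap$, so the control function can be defined by a bounded case distinction on the $A/\Theta$-value of $s^A$ together with the letter $a$ read. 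Making this correspondence rigorous -- and in particular ruling out configurations that would force $\mathrm{Aux}$ to grow uncontrollably as the construction is iterated over the core of $A$ -- is where I expect the argument to stand or fall, and presumably accounts for the statement remaining a conjecture rather than a theorem.
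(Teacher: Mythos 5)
The first thing to say is that the paper does not prove this statement at all: it is stated explicitly as a \emph{conjecture}, and the authors only offer a possible route towards it via a second conjecture (Conjecture~\ref{conj-b}, on ``ladder congruences''). So there is no proof in the paper to measure you against, and your own candid admission at the end --- that the construction of $\mathrm{Aux}$ and $\alpha$ is ``where the argument stands or falls'' --- is exactly the right diagnosis. Your proposal is an outline of an attack, not a proof, and the gap you flag is the same gap the authors have.

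That said, one concrete step of your sketch is already in trouble before you reach the admitted gap. You propose, in the subdirectly irreducible case, to take an atom $p$ of $\mathrm{core}(A)$ and collapse it with $\bot_A$, claiming this ``is a congruence exactly as in the EF proof.'' In the EF proof that works because the actions go \emph{down} ($ax\leq x$), so $ap\in\{p,q_0\}$ and $aq_0=q_0$, making the two-element class closed under the actions. Here the actions go \emph{up} ($x\leq ax$ on the core), so $ap$ and $a\bot_A$ can land anywhere above $p$ and $\bot_A$ respectively; compatibility of $\{p,\bot_A\}$ with the action of $a$ requires $ap\,\Theta\,a\bot_A$, which is only guaranteed by the implication $x\leq y\leq ax\Rightarrow ay=ax$ when $p\leq a\bot_A$ happens to hold --- and if, say, $a\bot_A=\bot_A$, the implication gives nothing and $ap$ may be a third state, so the relation is not a congruence. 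This is precisely why the paper does not take your shortcut and instead conjectures (Conjecture~\ref{conj-b}) that the least nontrivial congruence of a subdirectly irreducible member is a \emph{ladder congruence}, whose two-element classes must be covering pairs threaded compatibly through the actions; identifying and exploiting such a congruence is a genuinely harder combinatorial problem than picking an atom. So beyond the missing $\mathrm{Aux}$ construction, the choice of the congruence to quotient by is itself an unresolved (and, per the paper, open) issue in your outline.
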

We have generated numerous members of $\langle\mathrm{AF}\rangle_M$ and
we were always able to find a particular type of congruence which we call
a ``ladder congruence'':
\begin{defi}
	Given a (positive, semilattice, letter idempotent) forest automaton $A$, a congruence
	$\Theta$ of $A$ is called a ``ladder congruence'' if it satisfies all the
	following conditions:
	\begin{itemize}
		\item Each $\Theta$-class consists of either one or two elements. Hence if
			$\{p,q\}$ is a class for $p\neq q$, then as $p+q$ also belongs to the class,
			it has to be the case that $p+q\in\{p,q\}$, thus either $p\leq q$ or
			$q\leq p$ holds. Moreover, there is no $r$ with $p<r<q$, since in that case
			$r$ should also belong to this $\Theta$-class as $p=(p+r)\Theta (q+r)=r$.
		\item For any $\Theta$-class $C=\{p,q\}$ consisting of two states $p<q$
			and for any letter $a$, it is either the case that $p$ is not in the
			image of $a$, or for every other  $\Theta$-class $D$ with
			$aD=C$, either $D=\{r\}$ is a singleton class and $ar=p$, or
			$D=\{r,s\}$ consists of two states $r<s$ and $ar=p$, $as=q$.
	\end{itemize}
\end{defi}
It is relatively easy to check that if $\Theta$ is a ladder congruence of
$A$, then $A$ is a quotient of $A/\Theta\times_\alpha\mathrm{AF}$ for a suitable
control function $\alpha$.

Based on our experiments, we also propose the following conjecture:
\begin{conj}
\label{conj-b}
	Every nontrivial connected member of $\langle\mathrm{AF}\rangle_M$ is either
	subdirectly reducible, or its least nontrivial congruence is a ladder
	congruence.
\end{conj}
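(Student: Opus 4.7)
The plan is to induct on the complexity of the expression of $A$ as a member of $\langle\mathrm{AF}\rangle_M$, exploiting the fact that $\mathrm{AF}$'s own unique nontrivial proper congruence, which merges $\{0,1\}$ and keeps $\{2\}$ as a singleton, is itself a ladder congruence (a short case check). By the normal-form result already invoked in the preceding proposition, write $A$ as a homomorphic image, under some surjection $\pi$, of a product of the form $P=\bigl(\cdots((\mathrm{AF}'_1\times_{\alpha_1}\mathrm{AF}'_2)\times_{\alpha_2}\mathrm{AF}'_3)\cdots\bigr)\times_{\alpha_{n-1}}\mathrm{AF}'_n$, where each $\mathrm{AF}'_i$ is a renaming of $\mathrm{AF}$ and connected parts are taken after each Moore product; choose a representation with $n$ minimal.

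On $P$, consider the congruence $\sim$ obtained by lifting the monolith of the last factor through the Moore structure: $(x,c)\sim(y,d)$ iff $x=y$ and $\{c,d\}$ lies inside one $\mathrm{AF}$-monolith class. A direct check (the same computation that shows $\mathrm{AF}$'s monolith is preserved by its actions, combined with the definition of the Moore product) shows that $\sim$ is a congruence of $P$ with classes of size at most two, and that it is in fact a ladder congruence of $P$. The key observation here is that for a two-element class $\{(x,0),(x,1)\}$ in which $(x,0)$ is in the image of some letter $a$, the control function is forced to send the relevant preimage through $\alpha=0$ in $\mathrm{AF}'_n$ (since the letter $1$ of $\mathrm{AF}$ collapses everything to $1$), which produces exactly the strict pairing required by the second clause of the ladder definition.

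Next, push $\sim$ forward through $\pi$: let $\Psi$ be the congruence of $A$ generated by $\{(\pi(u),\pi(v)) : u\sim v\}$. If $\Psi$ were the identity, then $\pi$ would factor through $P/\sim$, but $P/\sim$ can be rewritten as an iterated Moore product of length $n-1$ (one absorbs the collapsed $\mathrm{AF}'_n$-coordinate into the preceding control function), contradicting the minimality of $n$. Hence $\Psi$ is nontrivial. Assuming $A$ is subdirectly irreducible (the complementary case is exactly the ``subdirectly reducible'' alternative of the conjecture), the monolith $\mu$ satisfies $\mu\subseteq\Psi$, so $\mu$ itself has classes of size at most two, and two-element $\mu$-classes are automatically intervals without strict interior points by semilattice convexity. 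It remains to verify the action-compatibility clause for $\mu$: given a two-element $\mu$-class $\{p<q\}$ with $p=a\cdot r$, lift $p$ and $q$ to a $\sim$-related pair in $P$ and trace every $\pi$-preimage through the control function $\alpha_{n-1}$, using the $P$-level verification to read off the required structure of each preimage $\mu$-class.

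The principal obstacle is this last transfer: quotienting by $\ker\pi$ can turn a $\sim$-pair in $P$ into a singleton in $A$, or can identify two distinct $\sim$-classes into a single $\mu$-class, and one must show that in every such ``collision'' the resulting class is still either a singleton mapping to $p$ or a strict pair mapping coordinate-wise to $(p,q)$. This is where the combinatorial heart of the conjecture lies, and I expect the argument to hinge on the implication $x\le y\le ax\Rightarrow ax=ay$ from the previous proposition, which is exactly strong enough to forbid the bad collision patterns (a top element of one preimage class being identified with a bottom element of another). Any counterexample to the conjecture would have to arise from a collision pattern that escapes this implication, which is why generating subdirectly irreducible members of $\langle\mathrm{AF}\rangle_M$ computationally and searching for such a pattern is a natural way to test the conjecture before committing to the full proof.
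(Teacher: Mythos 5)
The paper offers no proof of this statement: it is stated as a conjecture, supported only by computational experiments on generated members of $\langle\mathrm{AF}\rangle_M$, so there is no argument of the authors' to compare yours against. Your proposal must therefore be judged on its own, and as you yourself concede in the final paragraph, it is a proof strategy with the decisive step missing rather than a proof. The first half is sound: the relation $\sim$ that lifts the monolith $\{0,1\}/\{2\}$ of the last $\mathrm{AF}$-factor is indeed a congruence of $P$ (the control letter applied in the last coordinate depends only on the first coordinate, so related pairs stay related), it is a ladder congruence (the preimage analysis works because all classes mapping into a fixed target class see the \emph{same} letter $\alpha_{n-1}(a,x)$ of $\mathrm{AF}$, which either acts as $1$, making the bottom element unreachable, or as $0$, producing exactly the singleton-to-$p$ / pair-to-$(p,q)$ pattern), and the minimality contradiction can be repaired: by positivity the second coordinate of the connected part of $P/\sim$ is determined by the first, so that connected part is isomorphic to the $(n-1)$-fold product, which your phrase about ``absorbing the collapsed coordinate'' glosses over but which is true.

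The genuine gap is the transfer through $\pi$. From $\mu\subseteq\Psi$ you conclude that $\mu$ has classes of size at most two, but $\Psi$ is the congruence \emph{generated} by the pairs $(\pi u,\pi v)$ with $u\sim v$, and nothing prevents a chain of collisions $\pi(x,1)=\pi(y,0)$, $\pi(y,1)=\pi(z,0)$, \dots\ from producing $\Psi$-classes (and hence, a priori, $\mu$-classes) of size three or more; semilattice convexity does not forbid this, since a three-element chain can perfectly well be a congruence class of a semilattice. Even restricted to the two-element case, the action clause of the ladder definition must be verified for $\mu$ in $A$, where a single preimage class under a letter $a$ may be the $\pi$-image of several $\sim$-classes exhibiting different behaviours, and your appeal to the implication $x\leq y\leq ax\Rightarrow ax=ay$ is an expectation, not an argument --- that implication constrains the action of one letter on a chain of states and it is not shown how it controls the purely horizontal identifications that $\pi$ is free to make. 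Until the collision analysis is carried out, the proposal does not establish the conjecture; it does, however, usefully reduce it to a concrete combinatorial question about $\ker\pi$ versus $\sim$, which is a reasonable way to organize both a proof attempt and a computational search for counterexamples.
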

Both of Conjectures~\ref{conj-a} and~\ref{conj-b} would imply decidability
of the membership problem of the class $\langle\mathrm{AF}\rangle_M$, thus
the decidability of the definability problem of TL[AF].

\section{Conclusions}
We defined the Moore product of forest automata and showed that this product operation corresponds exactly
to the application of temporal logic modalities on forests for a semantics slightly different from the
already existing one in the literature.
We think that characterizing the logic CTL (say) for forests might be a slightly easier research objective
than for doing the same for the setting of strictly ranked trees, and still the results might be easy to lift
to that setting as well. We think that a way seeking for decidable characterizations is to find first
several identities that hold for the algebraic bases of the logic in question, and are preserved in
Moore products, quotients and renamings. Such properties give necessary conditions for an automaton
to be a member of the corresponding pseudovariety. Then, if the set of identities is complete, one
can show that it is sufficient, either by decomposing directly using the algebraic framework
(as it's done in this paper) or by writing formulas defining the languages recognized
in the states of the automaton and it is a matter of personal taste which option one chooses
for this second direction.

Or course a decidable characterization of the full CTL logic would be very interesting to get.
It would be also interesting to know whether the identities of Section~\ref{sec-af} are complete for TL[AF].

\section*{Acknowledgements}
The authors thank Andreas Krebs and the late Zoltán Ésik for discussion on the topic.

\bibliographystyle{plain}
\bibliography{biblio}{}

\begin{thebibliography}{10}

\bibitem{DBLP:journals/corr/abs-1208-6172}
Mikolaj Bojanczyk, Howard Straubing, and Igor Walukiewicz.
\newblock Wreath products of forest algebras, with applications to tree logics.
\newblock {\em Logical Methods in Computer Science}, 8(3), 2012.

\bibitem{DBLP:conf/birthday/BojanczykW08}
Miko{\l}aj Boja{\'{n}}czyk and Igor Walukiewicz.
\newblock Forest algebras.
\newblock In J{\"{o}}rg Flum, Erich Gr{\"{a}}del, and Thomas Wilke, editors,
  {\em Logic and Automata: History and Perspectives [in Honor of Wolfgang
  Thomas].}, volume~2 of {\em Texts in Logic and Games}, pages 107--132.
  Amsterdam University Press, 2008.

\bibitem{DBLP:journals/tcs/Esik06a}
Zolt{\'{a}}n {\'{E}}sik.
\newblock Characterizing ctl-like logics on finite trees.
\newblock {\em Theor. Comput. Sci.}, 356(1-2):136--152, 2006.

\bibitem{DBLP:journals/fuin/EsikI08}
Zolt{\'{a}}n {\'{E}}sik and Szabolcs Iv{\'{a}}n.
\newblock Products of tree automata with an application to temporal logic.
\newblock {\em Fundam. Inform.}, 82(1-2):61--78, 2008.

\bibitem{DBLP:journals/fuin/EsikI08a}
Zolt{\'{a}}n {\'{E}}sik and Szabolcs Iv{\'{a}}n.
\newblock Some varieties of finite tree automata related to restricted temporal
  logics.
\newblock {\em Fundam. Inform.}, 82(1-2):79--103, 2008.

\bibitem{esik:hal-00173125}
Zoltan Esik and Pascal Weil.
\newblock {Algebraic characterization of logically defined tree languages}.
\newblock {\em {International Journal of Algebra and Computation}},
  20:195--239, 2010.
\newblock 46 pages. Version 3: various local improvements (more typos
  corrected, and ''intuitive'' explanations added).

\bibitem{McNaughton:1971:CA:1097043}
Robert McNaughton and Seymour~A. Papert.
\newblock {\em Counter-Free Automata (M.I.T. Research Monograph No. 65)}.
\newblock The MIT Press, 1971.

\bibitem{schutzenberger}
M.P. Schützenberger.
\newblock On finite monoids having only trivial subgroups.
\newblock {\em Information and Control}, 8(2):190 -- 194, 1965.

\bibitem{Straubing:1994:FAF:174703}
Howard Straubing.
\newblock {\em Finite Automata, Formal Logic, and Circuit Complexity}.
\newblock Birkhauser Verlag, Basel, Switzerland, Switzerland, 1994.

\bibitem{Wu:2007:NCT:1225969.1226362}
Zhilin Wu.
\newblock A note on the characterization of tl[ef].
\newblock {\em Inf. Process. Lett.}, 102(2-3):48--54, April 2007.

\end{thebibliography}

\end{document}